\newtheorem{prop}{Proposition}
\DeclarePairedDelimiterX{\infdivx}[2]{(}{)}{%
	#1\;\delimsize\|\;#2%
}
\begin{document}

	\title{On the distribution of the mean energy in the unitary orbit of quantum states}
	\author{Raffaele Salvia} 
	\email[Corresponding author: ]{raffaele.salvia@sns.it}
	\affiliation{Scuola Normale Superiore and University of Pisa, I-56127 Pisa, Italy}
	
	\author{Vittorio Giovannetti}
	\affiliation{NEST, Scuola Normale Superiore and Istituto Nanoscienze-CNR, I-56127 Pisa, Italy}
	
	\begin{abstract}
Given a closed quantum system, the states that can be reached with a cyclic process are those with the same spectrum as the initial state. Here we prove that, under a very general assumption on the Hamiltonian, the distribution of the  mean  extractable work is very close to a gaussian with respect to the Haar measure.
	We derive bounds for both the moments of the distribution of the mean energy of the state and for its characteristic function, showing that the discrepancy with the normal distribution is increasingly suppressed for large dimensions of the system Hilbert space. 
	\end{abstract}

	\maketitle
	
	\section{Introduction}

	Haar-uniform random unitary matrices are a resource required for various quantum algorithm \cite{Hayashi2005, Scott2008, BrandaoHorodecki2008}. 
	As an example, the \emph{randomised benchmark} protocol is a method to test the error rate of a quantum circuit, requiring it to perform a sequence of random operations \cite{Emerson2005}. Versions of the randomised benchmark are used by the companies IBM \cite{McKay2019} and Microsoft \cite{Helsen2019} to test the functionality of their experimental quantum computing hardware.
	Other applications of random unitaries include quantum cryptography \cite{DiVincenzo2002} and the simulation of many body physics \cite{Nahum2017, Jonay2018}.

	In this paper we charachterise the the distribution of the energetic cost of implementing a random unitary transformation. To be more specific, we study the distribution  of the expected value of the energy gained from the cycle which sends the quantum state $\rho$ to $U\rho U^\dagger$, where $U$ is a matrix extracted randomly with respect to the Haar measure of the unitrary group. 
	Our main finding is that, when the quantum system has many degrees of freedom (that is when the Hilbert space has dimension $d \gg 1$), then -provided a weak condition on the system Hamiltonian-  the distribution of the energy cost of a random unitary matrix is approximately Gaussian.

	Implementing a random unitary is not an easy task. Actually, only a small subset of quantum operations (called the \emph{Clifford gates} \cite{Gottesman1998}) are easy to implement in an actual circuit - the complexity of a quantum circuit is often measured with the number of non-Clifford gates it requires \cite{Veitch2014}.
	To overcome these difficulties, it has been theorised the possibility of circuits that simulate a random unitary up to a certain moment of the distribution \cite{Gross2007}. A circuit which is able to emulate a uniform distribution up to the $t$-th moment is called an \emph{unitary t-design}. If we want to realise a $t$-design with $t > 3$, it is still necessary to use non-Clifford gates \cite{Sawicki2017, Bannai2018}, but only a small amount of them \cite{Haferkamp2020}.

	The results of this paper are useful for characterising an ideal source of uniformly distributed unitary operations. Furthermore, they are valid also for its approximations which are used in actual quantum computing, the $t$-designs which match the uniform unitary distribution up to the $t$-th moment.
	Indeed we will show that, if the dimension of the Hilbert space is big enough, and if it holds a very general condition, the first moments of the distribution are very well approximated by the moments of a gaussian distribution with the same variance.
	Exploiting this result, we will then be able to estimate the maximum error that we make in replacing the charachteristic function of distribution of the energy with the characteristic function of a normal distribution. 
 
	We emphasize that our paper is concerned with the distribution of the mean estracted work \emph{between different processes}, which is different from the distribution of extracted work \emph{in a given process} (a much more studied subject in in classical and quantum theormodynamics \cite{Esposito2009, CampisiHanggiTakner}).
	The latter distribution is in many cases Gaussian for classical systems in the quasistatic limit~\cite{Speck2004, ubrt2007, Speck2011}, but not quantum system~\cite{Miller2019}: in particular, as shown in Ref.~\cite{Scandi2020}, for slowly driven quantum-thermodynamical processes the work distribution becomes non-gaussian whenever quantum coherences are generated during the protocol.
	However, we may think to a ``process'' which consists in selecting a random unitary matrix according to the Haar measure and then applying it to the system: in this case, under the conditions mentioned above the work distribution would be Gaussian: we think it worth to notice that, when the initial state of the system is a state of thermal equilibrium (so that Jarzynski's equality \cite{jarzynski1997} holds), this implies the validity of the classical fluctuation-dissipation relation \cite{Esposito2009}.

	\section{Introduction to the problem} \label{sec1} 
	Let $A$ be a $d$-dimensional quantum system  initialized in the state $\hat{\rho}$ and forced to evolve in 
	time by  external modulations of its Hamiltonian $\hat{H}$. 
	Following~\cite{Pusz1978,Lenard1978}, the average amount of work one can extract from the process
	can be computed as
	\begin{equation}
	W_U(\hat{\rho};\hat{H}) := E(\hat{\rho};\hat{H}) - E(\hat{U} \hat{\rho} \hat{U}^\dagger; \hat{H}) \;,
	\label{extracted_workWu}
	\end{equation}
	where 
	\begin{eqnarray} E(\hat{\rho}; \hat{H}):= \mbox{Tr}[\hat{\rho} \hat{H}]\label{MEANE}\;,\end{eqnarray}  is the mean energy of $\hat{\rho}$, and $\hat{U}$ is the element of the unitary group $\mathbf{U}(d)$ 
	associated to the applied driving.
The allowed values of $W_U(\hat{\rho};\hat{H})$ 
 are limited by the inequalities
\begin{eqnarray} \label{boundaries} 
{\cal A}(\hat{\rho};\hat{H})   \leq W_U(\hat{\rho};\hat{H}) \leq {\cal E}(\hat{\rho};\hat{H})  \;, 
\end{eqnarray} 
where ${\cal E}(\hat{\rho};\hat{H})$ and  ${\cal A}(\hat{\rho};\hat{H})$
 are respectively the  {\it ergotropy} and {\it anti-ergotropy} functionals of the model.
The first one was introduced in Ref.~\cite{Allahverdyan2004} and corresponds to maximum work one can 
extract from the system by optimizing  $W_U(\hat{\rho};\hat{H})$ with respect to the choices of the control unitary $\hat{U}$ for fixed
$\hat{\rho}$ and $\hat{H}$.  
A closed formula for the ergotropy is provided by the expression  
\begin{eqnarray}
 {\cal E}(\hat{\rho};\hat{H})  &:=&   \max_UW_U(\hat{\rho};\hat{H}) = W_{\hat{U}^{(\downarrow)} } (\hat{\rho}; \hat{H})  \nonumber \\
 &=&  E(\hat{\rho};\hat{H}) - E(\hat{\rho}^{(\downarrow)} ; \hat{H}) \;,
	\label{ergotropy}
	\end{eqnarray}
obtained by setting $\hat{U}$ equal to the optimal unitary  $\hat{U}^{(\downarrow)}$ for which 
 the final state of the cycle $\hat{U}\hat{\rho}\hat{U}^{\dagger}$  corresponds to the passive counterpart  $\hat{\rho}^{(\downarrow)}$ of $\hat{\rho}$~\cite{Pusz1978, Lenard1978}, i.e.  
\begin{equation} 
\hat{U}^{(\downarrow)}: =  \sum_{j=1}^d |\epsilon^{(\uparrow)}_j\rangle\langle \lambda_j^{(\downarrow)}  |\;,  
\quad 
\hat{\rho}^{(\downarrow)} :=  \sum_{j=1}^d \lambda_j^{(\downarrow)} |\epsilon^{(\uparrow)}_j\rangle\langle\epsilon^{(\uparrow)}_j|\;,\label{passive} 
\end{equation} 
with $|\lambda_j^{(\downarrow)}  \rangle$ the eigenvector of $\hat{\rho}$ associated with the eigenvalue $\lambda_j^{(\downarrow)}$
listed in non-increasing order (i.e. 
$\lambda_j^{(\downarrow)} \geq \lambda_{j+1}^{(\downarrow)}$), and $|\epsilon^{(\uparrow)}_j\rangle$ 
the eigenvector of $\hat{H}$ with eigenvalue $\epsilon^{(\uparrow)}_j$ listed instead in non-decreasing order 
(i.e. 
$\epsilon_j^{(\uparrow)} \leq \epsilon_{j+1}^{(\uparrow)}$). 
Similarly for the   anti-ergotropy we can write
\begin{eqnarray} {\cal A
	}(\hat{\rho};\hat{H})  &:=&   \min_UW_U(\hat{\rho};\hat{H}) =  W_{\hat{U}^{(\uparrow)} } (\hat{\rho};\hat{H})  \nonumber \\ 
	&=& E(\hat{\rho};\hat{H}) - E(\hat{\rho}^{(\uparrow)} ; \hat{H}) \;,
	\label{a-ergotropy}
	\end{eqnarray}
where now $\hat{\rho}^{(\uparrow)}$ is the anti-passive state of $\hat{\rho}$, 
obtained by the unitary $\hat{U}^{(\uparrow)}$ that   reverses the order in which the  populations of  $\hat{\rho}^{(\downarrow)}$ are listed, i.e. 
\begin{equation} 
\hat{U}^{(\uparrow)}: =  \sum_{j=1}^d |\epsilon^{(\uparrow)}_j\rangle\langle \lambda_j^{(\uparrow)}  |\;,  
\quad 
\hat{\rho}^{(\uparrow)} := \sum_{j=1}^d \lambda_{j}^{(\uparrow)} |\epsilon^{(\uparrow)}_j\rangle\langle\epsilon^{(\uparrow)}_j|\;,
\end{equation} 
with $\lambda_j^{(\uparrow)} := \lambda_{d-j+1}^{(\downarrow)}$. 

Saturating the upper bound~(\ref{boundaries}) is  an important optimization task which can be practically difficult to implement, as it  implicitly requires an exact knowledge of the full spectral decomposition of the system Hamiltonian. 
In this perspective, it is  interesting to understand how close  one  can get  from the boundary values~(\ref{boundaries}) by randomly selecting 
$\hat{U}$  for a model in which  both $\hat{\rho}$ and $\hat{H}$ are assigned. Clearly, 
as the dimensionality of the system increases, we do not expect such a naive approach to be particularly effective.
 Still, providing an exact characterization of the associated efficiency   is a well-posed statistical question which can be of some help in identifying which physical systems
are best suited as successful candidates for implementing quantum battery models~\cite{Campaioli2018, AlickiFannes2013, Hovhannisyan2013, Binder2015, JuliFarr2020}.  
In order to tackle this issue, in the present paper we 
study 
 the probability distribution  $P(E| \hat{\rho}; \hat{H})$ 
   of the mean output energy 
\begin{eqnarray} \label{RANDOME} 
E := E(\hat{U} \hat{\rho} \hat{U}^\dagger; \hat{H}) \;,
\end{eqnarray} 
 which originates by random sampling $\hat{U}$ on the unitary group $\mathbf{U}(d)$ via its natural measure (the Haar measure $d\mu(\hat{U})$)~\cite{NOTA00}. A numerical example of this distribution, with $d=7$ is showed in Fig.~\ref{fig:distribuzione}.
 From Eq.~(\ref{extracted_workWu}) it is clear that knowing $P(E| \hat{\rho}; \hat{H})$ we can then reconstruct the 
 probability distribution 
 $P_{\text{work}}(W|  \hat{\rho}; \hat{H})$
of the average  extracted work $W:=W_U(\hat{\rho};\hat{H})$ via a simple shift of the argument, i.e. 
	 \begin{equation}
	P_{\text{work}}(W|  \hat{\rho}; \hat{H}) : = P\Big(E=E(\hat{\rho};\hat{H}) - W\Big|  \hat{\rho}; \hat{H}\Big)  \;.
	\label{extracted_workWuprob}
	\end{equation}  
	
\begin{figure} 
	\centering
	\includegraphics[width=\columnwidth]{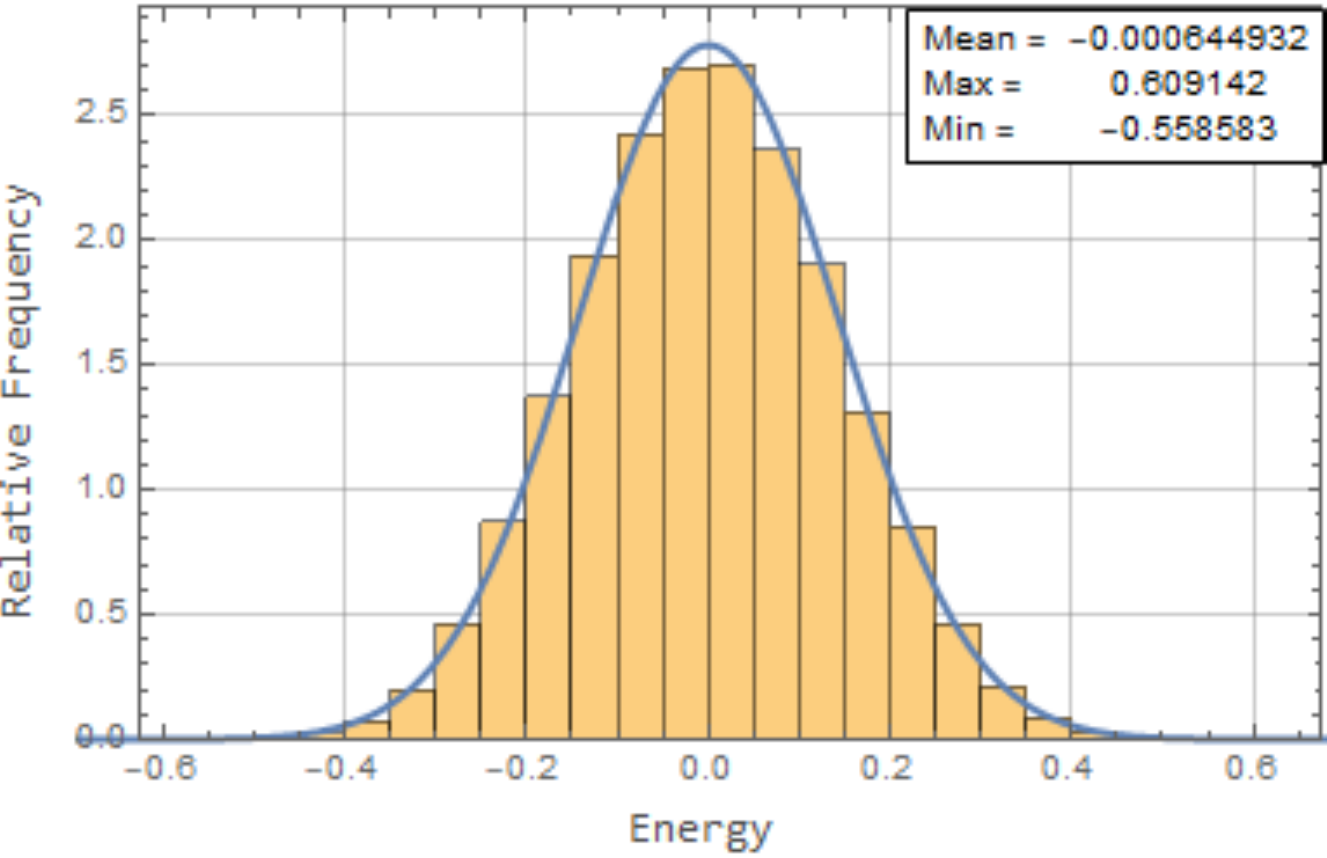}
	\caption{Numerical estimation of the distribution $P(E| \hat{\rho}; \hat{H})$ in a seven-levels ($d=7$) system in which $\hat{H}$ has eigenvalues  $\{ -1.6,-1.2,-0.6,0,0.4,1.3,1.7 \}$,  and $\hat\rho$ has eigenvalues
    $\{ 0.395,0.224,0.151,0.115,0.079,0.0020,0.013 \}$. 
  The histogram plots the empirical distribution of $E(\hat{U} \hat{\rho} \hat{U}^\dagger; \hat{H})$ for a sample of $10^5$ unitary matrices $U \in {\bf U}(7)$, which are distributed uniformly according to the Haar measure. The blue line is the probability density function of the Gaussian distribution $P_G^{(\mu,\Sigma^{(2)})}(E)$ with the variance $\Sigma^{(2)}$ given by~(\ref{varianza}).    In this example $\Sigma^{(2)} \simeq 0.02024$ and $\eta_{\hat{H}} \simeq 0.2317$.}
\label{fig:distribuzione}
\end{figure}

Our main finding is  that, under mild assumptions on the system Hamiltonian,  when the dimension of the Hilbert space $d$ is sufficiently large, the central moments of the probability $P(E| \hat{\rho}; \hat{H})$  
  can be approximated by those of a gaussian distribution $P_G^{(\mu,\Sigma^{(2)})}(E)$ having  mean value $\mu$ and variance $\Sigma^{(2)}$ equal to those of $P(E| \hat{\rho}; \hat{H})$, i.e. 
	\begin{equation}
	\label{intro_momenti_gaussiani}
	\Sigma^{(p)}  := \left\langle \left( E - \mu \right)^p	\right\rangle \simeq \Sigma_G^{(p)}\;, 
	\end{equation}
	where $\langle f(E) \rangle$ denotes the mean value  of the function $f(E)$  with respect to 
 $P(E| \hat{\rho}; \hat{H})$~\cite{NOTA00} and 
 \begin{eqnarray} \label{GAUSMOM} 
  \Sigma_G^{(p)} : = {\mathbf G}_{p} \times
  \begin{cases}
	1 &  \mbox{($p$ even)}\;, \\\\%
	0 & \mbox{($p$ odd)\;,}
	\end{cases}
 \end{eqnarray} 
 with  ${\mathbf G}_{p}$ the scaling factor 
 \begin{eqnarray} \label{SCALING} 
 {\mathbf G}_{p} : = (p-1)!! \; \left(\Sigma^{(2)}\right)^{p/2}\;.
 \end{eqnarray} 
In order to settle the approximation~(\ref{intro_momenti_gaussiani}) into firm quantitative ground  
 we prove that for $d$ sufficiently larger than $p$ 
  the discrepancies  between the l.h.s and the r.h.s of such equation can be bounded as
 \begin{eqnarray}
	{\left\lvert \Sigma^{(p)} -  \Sigma_G^{(p)} \right\rvert}
	\leq   {\mathbf G}_{p}  \;  f_{\hat{H}}(d,p)   \;, 
	\label{sezII_bbound_erroresusigma1}
	\end{eqnarray}
where $f_{\hat{H}}(d,p)$ is a positive function which 
in the large $d$ limit scales as $\mathcal{O}(1/d + \eta_{\Delta\hat{H}})$ for $p$ even, and  $\mathcal{O}(\sqrt{\eta_{\Delta\hat{H}}})$ for $p$ odd,
with  $\eta_{\Delta\hat{H}}$ being a functional (see Eq.~(\ref{FONDtermH}) below)
that  for typical choices of  the system Hamiltonian is very much depressed. 
Using the theory of generating functions, we also show a stronger result which, aside from bounding the error on each individual moment of the distribution
 $P(E| \hat{\rho}; \hat{H})$,  directly links  its characteristic function  with the one of the gaussian function $P_G^{(\mu,\Sigma^{(2)})}(E)$.

	The rest of the manuscript is organised as follows. 
	Section~\ref{sec:moments_1} introduces the concepts and some known results we will use in the proof. 
	In particular, Sec.~\ref{sec:momenti_defint} sets up the problem of calculating the moments of the distribution, which can be expressed as the Haar integral~(\ref{integrale_momenti}). The evaluation of this integral can be set up using the theory of Weingarten calculus, that we introduce in section \ref{sec:weingarten}. The practical computation of the integrals involved in the proof will require other combinatorical concepts, which will be introduced at the points where they are needed.
		In Sec.~\ref{sec:normaH} we derive some useful inequalities and discuss the assumptions on the system  that are needed to enforce the Gaussian approximation.
	Section~\ref{sec:outlook_statipuri} presents a proof of Eq.~(\ref{sezII_bbound_erroresusigma1}) that applies for the special case in which 
	the input state of the system $\hat{\rho}$ is pure, and in 
	Sec.~\ref{sec:bound_momenti} we address instead the case of arbitrary input states.	
	Finally, in Sec.~\ref{sec:moments_mgf} we  bound instead  the distance between the characteristic function $\langle e^{-itE} \rangle$ of the distribution $P(E| \hat{\rho}; \hat{H})$, and the characteristic function of the Gaussian distribution $P_G^{(\mu,\Sigma^{(2)})}(E)$. 
Conclusion and outlook are finally given in Sec.~\ref{Sec:conc}.

\section{Preliminary considerations}
\label{sec:moments_1}
This section is dedicated  to clarify some useful mathematical properties of the model.

\subsection{Basic properties of $P(E| \hat{\rho}; \hat{H})$} 

From Eqs.~(\ref{MEANE}) and ~(\ref{boundaries}) it follows that for fixed  $\hat{\rho}$ and $\hat{H}$ the range of the random variable $E$ is limited  by the inequalities  
\begin{equation} \mbox{Tr}[\hat{\rho}^{(\downarrow)}  \hat{H}]=
\sum_{j=1}^d  \lambda_j^{(\downarrow)} \epsilon^{(\uparrow)}_j \leq E \leq \sum_{j=1}^d  \lambda_j^{(\uparrow)} \epsilon^{(\uparrow)}_j = \mbox{Tr}[\hat{\rho}^{(\uparrow)}  \hat{H}] \;.
\label{sorting_inequality}
\end{equation}
In the special cases where either the input state is completely mixed (i.e.  $\hat\rho = {\hat{\openone}}/d$), or the Hamiltonian is proportional to the identity 
 (i.e.  $\hat{H} = E_0{\hat{\openone}}$), 
the upper and lower bounds of (\ref{sorting_inequality}) coincide
 forcing $E$ to assume the constant value $\mbox{Tr}[\hat H]/d$, i.e. imposing 
 the distribution $P(E| {\hat{\rho}}; \hat{H})$  to become a Dirac delta   \begin{eqnarray} \label{DELTA} 
P(E| {\hat{\openone}}/d; \hat{H}) = P(E| \hat{\rho};  E_0{\hat{\openone}}) =   \delta(E - \tfrac{\mbox{\small Tr}[\hat H]}{d})\;.
\end{eqnarray}

From Eq.~(\ref{RANDOME}) it is also clear that for arbitrary choices of  $\hat{\rho}$  any rigid shift of the Hamiltonian spectrum results in a translation of the distribution 
$P(E| \hat{\rho}; \hat{H})$, i.e. 
\begin{eqnarray} \label{SHIFT} 
P(E| \hat{\rho}; \hat{H}) = P\left(E-E_0\Big| \hat{\rho} ;\hat{H}-{E_0}{\hat{\openone}}\right)  \;.
\end{eqnarray} 
 More generally noticing that for all choices of $\Delta_0,E_0\in \mathbb{R}$ one has 
\begin{eqnarray} 
E(\hat{U} \hat{\rho} \hat{U}^\dagger; \hat{H}) &=& \mbox{Tr}\left[ \hat{U} \left(\hat{\rho}-\Delta_0 \tfrac{{\hat{\openone}}}{d}\right)\hat{U}^\dagger  \left(\hat{H}-E_0{\hat{\openone}}\right)\right]  \nonumber \\
&&+\Delta_0 \left( \tfrac{\mbox{Tr}[\hat{H}]}{d} -E_0\right) +E_0\;,
\end{eqnarray} 
 we can draw the following formal identity
\begin{eqnarray} \label{SHIFTnew} 
&&P(E| \hat{\rho}; \hat{H})  \\ \nonumber 
&&= P\left(E-\Delta_0 \left( \tfrac{\mbox{Tr}[\hat{H}]}{d} -E_0\right) -E_0\Big| \hat{\rho}-\Delta_0 \frac{{\hat{\openone}}}{d}; \hat{H} - E_0{\hat{\openone}} \right)\;,
\end{eqnarray} 
where, generalizing the definition  of $P(E| \hat{\rho}; \hat{H})$, given  $\hat{A}$ and $\hat{B}$ generic operators,
we use the symbol $P(E'| \hat{A}; \hat{B})$  to represent the distribution of the variable $E':=\mbox{Tr}[ \hat{U} \hat{A} \hat{U}^\dag \hat{B} ]$ induced by the Haar measure $d\mu(\hat{U})$. 

We observe next  that 
 any two input states 
 $\hat{\rho}'$ and $\hat{\rho}$ which have the same spectrum will have the same energy probability distribution, i.e. 
 $P(E|\hat{\rho}'; \hat{H})=P(E| \hat{\rho}; \hat{H})$. 
  Indeed 
 under this condition we can always express $\hat{\rho}'$ as $\hat{V} \hat{\rho} \hat{V}^\dag$ with $\hat{V}\in \mathbf{U}(d)$, so that 
 for all functions $f(E)$ one has
 \begin{eqnarray} 
\int d E f(E) P(E|\hat{\rho}'; \hat{H}) = \int d\mu(\hat{U})  f(E(U \hat{V} \hat{\rho} \hat{V}^\dag U^\dagger; \hat{H})) \nonumber \\
 =
 \int d\mu(\hat{U}')  f(E(\hat{U}'\hat{\rho} \hat{U}^{\prime \dagger}; \hat{H})) =\int d E f(E) P(E| \hat{\rho}; \hat{H}) \;, \nonumber 
 \end{eqnarray} 
 where we used the fact that $\hat{U}'=\hat{U}\hat{V} \in \mathbf{U}(d)$, and the invariance property $ d\mu(\hat{U}) = d\mu(\hat{U}\hat{V}) $ of the Haar measure. Similarly due to
 the cyclicity of the trace appearing in Eq.~(\ref{MEANE}), we can conclude that 
$P(E| \hat{\rho}; \hat{H})$ is also invariant under unitary rotations of the system Hamiltonian, leading to the identity  \begin{equation}
 P(E|V\rho V^\dag; W HW^\dag)=P(E| \hat{\rho}; \hat{H})\;, \quad   \forall V,W \in \mathbf{U}(d)\;,
\end{equation} 
which ultimately implies that $P(E| \hat{\rho}; \hat{H})$ can only depend upon the spectra $\{ \lambda_j^{(\downarrow)} \}_{i}$ and 
$\{ \epsilon_j^{(\uparrow)} \}_{i}$
of $\hat{\rho}$ and $\hat{H}$ but not on the specific choices of  their associated 
eigenstates $\{ |\lambda_j^{(\downarrow)} \rangle\}_{i}$ and $\{ |\epsilon_j^{(\uparrow)} \rangle \}_{i}$ nor on the relative overlap between them.

\subsection{Central moments} 
\label{sec:momenti_defint}

In evaluating the moments of the distribution $P(E;\hat{\rho},\hat{H})$ we have to consider the quantities   
\begin{eqnarray}\label{integrale_momenti}
\langle E^p\rangle &=& \int d\mu(\hat{U})  \left( \Tr [\hat{U} \hat{\rho} \hat{U}^\dagger \hat{H}] \right)^p  \\
&=&\sum_{\substack{j_1, \cdots j_p\\ k_1, \cdots, k_p}}  \rho_{j_1k_1}\cdots\rho_{j_pk_p} \sum_{\substack{\ell_1, \cdots \ell_p\\ i_1, \cdots, i_p}}
 H_{\ell_1i_1} \cdots H_{\ell_pi_p}  \nonumber \\
&\times& \int d\mu(\hat{U})  \; U_{i_1j_1} U_{i_2j_2} \cdots U_{i_pj_p} U^{\dagger}_{k_1\ell_1} U^{\dagger}_{k_2\ell_2} \cdots U^{\dagger}_{k_p\ell_p} \; , \nonumber 
\end{eqnarray}
where  $\rho_{jk}$, $H_{\ell i}$, $U_{ij}$ are matrix elements of $\hat{\rho}$, $\hat{H}$, $\hat{U}$ with respect to fixed basis
of the Hilbert space of the system~\cite{NOTA}.
The integral appearing on the last  term of~(\ref{integrale_momenti}) is widely known and admit solution~\cite{Weingarten1978,Creutz1978,Bars1980}
in terms of  the \emph{Weingarten functions} $C_{[\sigma]}$  for the unitary group~\cite{Weingarten1978,Collins2003}, i.e. 
\begin{eqnarray}
\nonumber
\int d\mu(\hat{U}) \; U_{i_1j_1} U_{i_2j_2} \cdots U_{i_pj_p} U^{\dagger}_{k_1\ell_1} U^{\dagger}_{k_2\ell_2} \cdots U^{\dagger}_{k_p\ell_p} \\
= \sum_{\tau, \sigma \in S_p} C_{[\sigma]} \prod_{a=1}^p  \delta_{i_a \ell_{\tau(a)} } \delta_{j_a k_{\tau\sigma(a)} } \; ,  
\label{integrale_U}
\end{eqnarray} 
with $S_p$ representing the  permutation group of $p$ elements.
A precise definition of the $C_{[\sigma]}$s and their properties will be given in Sec.~\ref{sec:weingarten}. Here we simply notice that 
replacing~(\ref{integrale_U}) into~(\ref{integrale_momenti}) leads to the expression 
\begin{equation}
\langle E^p\rangle =  
\sum_{\tau, \sigma \in S_p} C_{[\sigma]}  \; \rho{[\sigma\tau]}\;  H{[\tau]} \;,
\label{integrale_contrai}
\end{equation}
where for $\hat{\Theta}$ generic operator and $\sigma\in S_p$, we introduced  the functional
\begin{eqnarray}\label{defTHETASIGMA}  
\Theta{[\sigma]} &:=&   \sum_{k_1, \cdots, k_p} \Theta_{k_1 k_{\sigma(1)}} \Theta_{k_2 k_{\sigma(2)}} \cdots \Theta_{k_p k_{\sigma(p)}}\;.
\end{eqnarray} 
To get a more intuitive understanding of what is this product, we need to introduce an important property of a permutation $\sigma\in S_p$: the structure of its cycles, which is sufficient to specify its \emph{conjugacy class}
$[\sigma]$.
If the permutation $\sigma$ has a fixed point, i.e. if there exists an $i \in \{ 1,2, \cdots, p\}$ such that $\sigma(i) = i$, then we say that the permutation $\sigma$ has a cycle of length 1. A cycle of length 2 (i.e. 
a transposition) means that there exist two indices $i, j \in  \{ 1,2, \cdots, p\}$ such that $\sigma(i) = j$ and $\sigma(j) = i$. In general given 
 $\sigma\in S_p$ its conjugacy class $[\sigma]$ is uniquely identified via 
 the   correspondence 
 \begin{eqnarray} \label{correspondence} 
 [\sigma]  \longleftrightarrow  \{ \alpha^{(1)}_{[\sigma]}, \alpha^{(2)}_{[\sigma]}, \dots, \alpha^{(c[\sigma])}_{[\sigma]}\} \;, 
 \end{eqnarray}   
 with  
$c[\sigma]$ being the number  of independent cycles admitted by  $\sigma$, and with the positive  integers
 $\alpha^{(j)}_{[\sigma]}$'s representing instead  the lengths of such cycles organized in decreasing order, i.e. 
 $\alpha^{(j)}_{[\sigma]}\geq \alpha^{(j+1)}_{[\sigma]}$.
 Because each element in $\{1,2 , \cdots, p\}$ belongs to one and only one of the cycles of $\sigma$,  we must have that 
\begin{eqnarray} \sum_{j=1}^{c[\sigma]}  \alpha^{(j)}_{[\sigma]} = p\;, \label{die} \end{eqnarray} 
implying that the $\alpha^{(j)}_{[\sigma]}$s  provide a  proper partition  of the integer $p$ (represented graphically with a \emph{Young diagram}~\cite{Young1900}).
 The set of indices $\{1,2 , \cdots, p\}$ can thus be partitioned in $c[\sigma]$ subsets $\mathfrak{c}^{(i)}$, which each $\mathfrak{c}^{(i)}$ having  $\alpha^{(j)}_{[\sigma]}$ elements.
 We finally observe that two permutations $\sigma$ and $\sigma'$  will belong to the
 same   conjugacy class  if and only if we can 
 identify a third permutation that allows us to relate them via conjugation, i.e. 
  \begin{eqnarray}\label{conjugation} 
[\sigma] =[\sigma'] \qquad \Longleftrightarrow \qquad  \exists \tau\in S_p  | \quad  \sigma' = \tau \sigma \tau^{-1}\;. 
 \end{eqnarray} 
 The fact that each element $i \in  \{1,2 , \cdots, p\}$ belongs to exactly one cycle of $\sigma$ allows us to rewrite Eq.~(\ref{defTHETASIGMA}) as
 \begin{eqnarray} \label{questa}
 \Theta{[\sigma]} &=& \sum_{k_1, \cdots, k_p} \prod_{r=1}^{c[\sigma]}  \prod_{i \in \mathfrak{c}^{(r)}}  \Theta_{k_i k_{\sigma(i)}}
 \\  \nonumber 
& =& \prod_{r=1}^{c[\sigma]} \sum_{\{ k_i \}_{i \in \mathfrak{c}^{(r)}}} \prod_{i \in \mathfrak{c}^{(r)}}  \Theta_{k_i k_{\sigma(i)}}
 = 
  \prod_{r=1}^{c[\sigma]} \Tr[\hat{\Theta}^{\alpha^{(1)}_{[\sigma]}}] \; ,
 \end{eqnarray}
 where in the last identity we used the fact that the the $r$-th summation runs over a set of $\alpha^{(r)}_{[\sigma]}$ cyclical indices.
 Equation~(\ref{questa}) makes it clear that, as explicitly indicated by the notation, the terms $\Theta{[\sigma]}$  (as well as 
 the coefficients $C_{[\sigma]}$, see Eq.~(\ref{conjugation1}) below)
depend upon $\sigma$ only via its conjugacy class $[\sigma]$. 
 Replaced into Eq.~(\ref{integrale_contrai}), the identity~(\ref{questa}) also 
 implies that  $\langle E^p\rangle$ can be expressed as linear combination  of products of 
traces of powers of $\hat{\rho}$ and $\hat{H}$, i.e. explicitly 
\begin{equation}
\langle E^p\rangle =  \label{integrale_contrainew}
\sum_{\tau, \sigma \in S_p} C_{[\sigma]}  \;  \left(\prod_{i=1}^{c[\sigma\tau]}     \mbox{Tr}[\hat{\rho}^{\alpha^{(i)}_{[\sigma\tau]}}] \right)
\left( \prod_{j=1}^{c[\tau]}     \mbox{Tr}[\hat{H}^{\alpha^{(j)}_{[\tau]}}] \right)\;.
\end{equation}
Equation~(\ref{integrale_contrainew}) is the starting point of our analysis: we notice incidentally that it confirms 
by virtue of Specht's theorem~\cite{Specht1940}, that $P(E| \hat{\rho}; \hat{H})$ (and hence  its moments) depends upon 
 $\hat{\rho}$ and $\hat{H}$ only through their spectra.   
In particular 
for the cases  $p=1,2$ we get
\begin{eqnarray} \label{firstmoment} 
\langle E\rangle &=& C_{[1]} \mbox{Tr}[ \hat{H}]    \;, \\ \label{second} 
\langle E^2\rangle &=& C_{[1,1]} (\mbox{Tr}[ \hat{H}] )^2 + C_{[2]} (\Tr[\hat{\rho}^{2}]) (\mbox{Tr}[ \hat{H}] )^2 \label{varianza_in_funzione_degli_C} \\ \nonumber  
&&+ C_{[1,1]} (\mbox{Tr}[\hat{H}^{2}]) (\Tr[\hat{\rho}^{2}]) + C_{[2]} (\mbox{Tr}[\hat{H}^{2}]) \;, 
\end{eqnarray} 
which using the explicit values of the functions $C_{[\sigma]}$ reported in Tab.~\ref{tab1} leads to 
\begin{eqnarray}
\mu&:=& \langle E \rangle= \mbox{Tr}[ \hat{H}] /d \;, 
\label{energia_media_uniorb} \\ 
\label{varianza}
\Sigma^{(2)}&:=& \langle (E - \mu)^2 \rangle \\  \nonumber 
&=& \frac{1}{d^2-1} \left( \mbox{Tr}[\hat{H}^2] -\frac{(\mbox{Tr}[\hat{H}])^2}{d} \right) \left( \Tr[\hat{\rho}^{2}] - \frac{1}{d} \right)\!.
\end{eqnarray}

\subsection{Shifting the spectrum of $\hat{H}$} \label{sec:shift}

Equation~(\ref{firstmoment})  makes it clear that, irrespectively from the specific form of the input state of the system $\hat{\rho}$,  
we can enforce the distribution $P(E; \hat{\rho}, \hat{H})$ to have zero mean value by setting $\mbox{Tr}[ \hat{H}]  = 0$
 via a rigid shift of the associated spectrum. In particular setting $E_0=\mu$ in Eq.~(\ref{SHIFT}) we can write
 \begin{eqnarray} \label{SHIFTnewimpo1} 
P(E| \hat{\rho}; \hat{H})  = P\left(E-\mu \Big| \hat{\rho}; \Delta\hat{H} \right)\;,
\end{eqnarray} 
while, setting 
 $E_0=\mu$ and $\Delta_0=1$ in  Eq.~(\ref{SHIFTnew}) we get
\begin{eqnarray} \label{SHIFTnewimpo} 
P(E| \hat{\rho}; \hat{H}) 
= P\left(E-\mu \Big| \Delta \hat{\rho}; \Delta\hat{H} \right)\;, 
\end{eqnarray} 
where 
\begin{eqnarray} \label{TRACELESS} 
 \Delta \hat{\rho} :=  \hat{\rho}-\frac{{\hat{\openone}}}{d} \;, \qquad \Delta \hat{H} :=  \hat{H}-\mu {\hat{\openone}}\;,
\end{eqnarray} 
are zero-trace operators.
Following the  derivation of the previous section we can then use Eqs.~(\ref{SHIFTnewimpo1}) and~(\ref{SHIFTnewimpo}) 
to write the central moments of $P(E| \hat{\rho}; \hat{H})$ in the following equivalent forms 
\begin{eqnarray}
 \Sigma^{p} =  \langle (E -\mu)^p \rangle  &=& 
\sum_{\tau, \sigma \in S_p} C_{[\sigma]}  \; \rho{[\sigma\tau]}\;  \Delta H{[\tau]} \label{integrale_contrainuovo1}  
\\ 
 \label{integrale_contrainuovo}  
&=&\sum_{\tau, \sigma \in S_p} C_{[\sigma\tau^{-1}]}  \; \Delta \rho{[\sigma]}\;  \Delta H{[\tau]}\;, 
\end{eqnarray}
where $\Delta \rho{[\sigma]}$ and $\Delta H{[\tau]}$ are now the functional~(\ref{questa}) associated with the operators~(\ref{TRACELESS}), and where
in Eq.~(\ref{integrale_contrainuovo})   we changed the summation variable via the
introduction of the inverse $\tau^{-1}$ of the permutation  $\tau$. 
Both Eqs.~(\ref{integrale_contrainuovo1}) and  ~(\ref{integrale_contrainuovo}) offer us a huge simplification in the analysis of the problem,  with the first 
having an important application in the special case of  pure input states.
In particular we notice that since $\Delta \hat{H}$ is a traceless operators, in both these expressions we can 
restrict the summation  on $\tau$ by only including those permutations which have 
 no fixed points  (i.e. $\alpha^{(j)}_{[\tau]} \geq 2$ for all $j$).  Such elements define the $\emph{derangement}$ subset $S_p^{D}$ of $S_p$ 
   and by construction can have 
 at most $\lfloor p/2 \rfloor$ cycles, or equivalently 
 \begin{eqnarray} 
 \tau \in S_p^{D} \qquad \Longrightarrow \qquad  \lvert \tau \lvert \geq p-\lfloor p/2 \rfloor \;, \label{DER}
 \end{eqnarray} where 
 \begin{eqnarray}\lvert \tau \rvert =  p - c[\tau] \label{IMPO1} \;,
 \end{eqnarray} 
is the minimal number of transpositions (i.e. cycles of length 2)  $\tau$ is a product of
 (indeed, if there are more than $p/2$ cycles in the permutation $\tau$, there must be at least one element $i$ of the set  $\{ 1, 2, \cdots, p\}$ for which $\tau(i)=i$).
In the case of Eq.~(\ref{integrale_contrainuovo}) a similar simplification can also be enforced for the summation over $\sigma$, as again here one deals with a traceless operator
$\Delta \hat{\rho}$.
To summarise 
 the following selection rules hold:  
\begin{eqnarray}
\tau \notin S_p^{D}  \quad &\Longrightarrow& \quad  \Delta H[\tau] =  \prod_{j=1}^{c[\tau]}     \mbox{Tr}[\Delta\hat{H}^{\alpha^{(j)}_{[\tau]}}]  = 0 \;, 
\label{fixed_points_eq_01} \\ 
\sigma \notin S_p^{D}  \quad &\Longrightarrow& \quad  \Delta \rho[\sigma] =  \prod_{j=1}^{c[\sigma]}     \mbox{Tr}[\Delta\hat{\rho}^{\alpha^{(j)}_{[\sigma]}}]  = 0 \;, 
\label{fixed_points_eq_0}
\end{eqnarray}
so that we can rewrite  Eqs.~(\ref{integrale_contrainuovo1}) and ~(\ref{integrale_contrainuovo})  as  
\begin{eqnarray} 
\Sigma^{(p)} &=&\sum_{\sigma \in S_p}  \label{simplyp1}
\sum_{\tau \in S_p^{D}} C_{[\sigma]}  \;   \rho{[\sigma\tau]}\;  \Delta H{[\tau]} \nonumber  \\ 
&=&\sum_{\sigma,\tau \in S_p^{D}} C_{[\sigma\tau^{-1}]}  \;  \Delta \rho{[\sigma]}\;  \Delta H{[\tau]}   \;, \label{simplyp} 
\end{eqnarray} 
respectively.

		\begin{center} 		
			\begin{table*}[t!]
\begin{tabular}{|cc|}
\hline    
 \multicolumn{2}{|c|}{
 \begin{tabular}{cc||cc}
  & & \\
$p=1:$    &   \qquad \qquad  ${C_{[1]}=\frac{1}{d}}$ \qquad\qquad  &
 \qquad  $p=2:$ 
  \qquad & \qquad	$
 {C_{[{2}]}= -\frac{1}{d(d^2-1)}} \;,  \qquad  
 C_{[{1,1}]}
 = \frac{1}{d^2-1}\, ,$
   \qquad \\
 & &
 \end{tabular} } \\
 \hline  \hline  & \\ 
$\qquad p=3:$ & 
$C_{[3]}=\frac{2}{(d^2 -  4) (d^2 - 1) d }, \qquad 
	 C_{[{2,1}]}=-\frac{1 }{(d^2 -  4) (d^2 - 1) },
	\qquad
	C_{[1,1,1]} 
	=\frac{d^2 -   2}{(d^2 -  4) (d^2 - 1) d }\;,$
	\\ &  \\     \hline  \hline  & \\
$\qquad p=4:$	 & 
\quad $C_{[{4}]}=-\frac{5}{(d^2- 9) (d^2 - 4) (d^2 - 1) d}, 
	\quad
	C_{[{3,1}]}=\frac{2d^2 - 3}{(d^2- 9) (d^2 - 4) (d^2 - 1) d^2}, \quad 
{C_{[{2,2}]} 
	=\frac{d^2 +  6}{(d^2- 9) (d^2 - 4) (d^2 - 1) d^2} 
}\,,$ \\  & \\    &
$C_{[{2,1,1}]}
=-\frac{d^2-4}{(d^2- 9) (d^2 - 4)(d^2 - 1) d}
\,, \quad {C_{[{1,1,1,1}]}
	=\frac{d^4 - 8 d^2 +   6}{(d^2- 9) (d^2 - 4) (d^2 - 1) d^2}}\;.$\\ & \\  \hline
\end{tabular}
				\caption{List of the first 
				{Weingarten functions} $C_{[\sigma]}$ expressed in terms of the dimension $d$ of the system Hilbert space. Data adapted  from Ref.~\cite{Zuber2016}. 
				\label{tab1}} 
			\end{table*}
		\end{center} 	
 
\subsection{Combinatorial coefficients}
\label{sec:weingarten}
The Weingarten functions $C_{[\sigma]}$ introduced in Eq.~(\ref{integrale_U}) play a fundamental role in the analysis 
of the moments~(\ref{simplyp}). 
  These terms are in general difficult to compute but a formal expression for them is provided by  the formula \cite{FultonHarris2004,Noaves2014}
 \begin{equation}
C_{[\sigma]} = \frac{1}{p!^2}\sum_{\substack{\lambda\vdash p\\ c[\lambda]\leq d}}
\frac{ c_1[\sigma]  d_\lambda^2}{s_{\lambda, p}(1^p)}\; ,
\label{Weingarten}
\end{equation}
where the sum runs over all the irreducible representations $\lambda$
of the permutation group $S_p$ which are generated by partitions of $d$ which have at most $p$ elements, $d_\lambda$ is the dimension of the representation $\lambda$, $c_1[\sigma]$ is the number of cycles of length $1$ of $\sigma$ ~\cite{NOTA1} (a quantity sometimes called the \emph{character} of $\sigma$),  are the characters, and  $s_{\lambda,p}(x_1, \dots, x_p)$ are the Schur polynomials in $p$ variables (so that $s_{\lambda, p}(1^p) \equiv s_{\lambda, p}(1, \dots, 1)$ is the dimension of the representation of the unitary group $\mathbf{U}(d)$ which corresponds to $\lambda$ via the Schur-Weyl duality). 
 For $p$ small  
a list of the values of the $C_{[\sigma]}$ is reported in Tab.~\ref{tab1}.
As the number of irreducible representations of $S_p$ increases with $p$, the exact computation of these factor becomes rapidly a computationally infeasible task.
For the purposes of the present work, however  we do not need to compute the coefficients, indeed we just need to use the following known properties.

First of all, since  $C_{[\sigma]}$ is a functional of the conjugacy class, exploiting Eq.~(\ref{conjugation}) 
we can write 
  \begin{eqnarray}\label{conjugation1} 
C_{[\tau \sigma \tau^{-1}]}  =C_{[\sigma]}\;, 
 \end{eqnarray} 
 for all $\tau,\sigma\in S_p$. 
Second,  the overall sign of $C_{[\sigma]}$ is the sign of the permutation $\sigma$, namely
\begin{equation}  
\label{sign_of_C} \mbox{Sign}[ C_{[\sigma]} ] = 
(-1)^{\lvert \sigma \rvert} = (-1)^{p - c[\sigma]} \; ,
\end{equation}
where $\lvert \sigma \rvert$   is the minimal number of transpositions $\sigma$ 
is a product of, see Eq.~(\ref{IMPO1}).
Finally  we shall use the fact that for large enough $d$ the following asymptotic behaviour holds true
\begin{equation}
|C_{[\sigma]}| = \frac{ \mathcal{C}_{[\sigma]} }{d^{p+|\sigma|}} + \mathcal{O}\left( \frac{1}{d^{p+|\sigma|+2}} \right) \; ,
\label{C_sigmagenerico}
\end{equation}
where the integer number $\mathcal{C}_{[\sigma]}$ -- sometimes called the \emph{M\"obius function} of the permutation $\sigma$ -- is equal to the product
\begin{equation}
\mathcal{C}_{[\sigma]} := \prod_{i=1}^{c[\sigma]} Cat_{{\alpha^{(i)}_{[\sigma]}}} \; ,
\end{equation}
with the $\alpha^{(j)}_{[\sigma]}$'s  defined by the correspondence~(\ref{correspondence}), and with 
\begin{eqnarray} \label{catalan} 
Cat_n := \frac{1}{n+1} {2n \choose n}\;,\end{eqnarray} 
being the  $n$-th Catalan number. 
In particular 
 for the identical permutation this implies 
\begin{equation} 
|C_{[1^p]} |= \frac{1}{d^p} + \mathcal{O}\left( \frac{1}{d^{p+2}} \right) \; .
\label{C_1}
\end{equation}
The scaling~(\ref{C_sigmagenerico}) can be derived e.g. from Ref.~\cite{Collins2017}  where 
Collins and Matsumoto~proved that when 
\begin{eqnarray} p \leq   \left\lfloor (d / \sqrt{6})^{4/7} \right\rfloor \;, 
 \label{CONDDD} \end{eqnarray}  one has
\begin{equation}
\frac{\mathcal{C}_{[\sigma]}}{d^{p + \lvert \sigma \rvert}} \left( 1 - \frac{p-1}{d^2} \right)^{-1} \leq
|C_{[\sigma]}| \leq
\frac{\mathcal{C}_{[\sigma]}}{d^{p + \lvert \sigma \rvert}} 
\left( 1 - \frac{6p^{7/2}}{d^2} \right)^{-1} \; ,
\label{bounds_su_C}
\end{equation}
which via some simple algebraic manipulation can be casted in a weaker, but sometimes more useful  form: 
\begin{eqnarray}
\frac{\mathcal{C}_{[\sigma]}}{(d^2 - 1)^{p/2} d^{\rvert \sigma \lvert}} \left( 1 + \frac{p/2-1}{d^2} \right) \leq
|C_{[\sigma]} | \leq
\frac{\mathcal{C}_{[\sigma]}}{(d^2 - 1)^{p/2} d^{\rvert \sigma \lvert}}
\left( 1 - \frac{6p^{7/2}}{d^2} \right)^{-1} \; .
\label{bounds_su_C_var}
\end{eqnarray}

\section{Bounding the trace  terms}
\label{sec:normaH}

Here we establish some useful relations that allow us to  bound  
 the terms $\Delta H[\tau]$ and $\Delta \rho[\sigma]$ entering in Eqs.~(\ref{simplyp1}) and (\ref{simplyp}) and which 
 will allow us to identify the necessary conditions on $\hat{H}$ that are needed to prove the Gaussian approximation~(\ref{intro_momenti_gaussiani}).
 For this purpose we shall relay on the properties of the derangement set $S_p^{D}$ and on the inequality~\cite{Kittaneh1985}
\begin{equation}
\label{ineqL2_rho}
| \mbox{Tr}[\hat{\Theta}^{q_2}] |^{1/q_2} \leq \| \hat{\Theta}\|_{q_2} \leq \| \hat{\Theta}\|_{q_1} \;, \quad \forall q_2\geq q_1>0\;, 
\end{equation} 
where for 
$q>0$
\begin{eqnarray} 
\| \hat{\Theta}\|_q: = \mbox{Tr}[ | \hat{\Theta}|^q]^{1/q}\label{SHATTEN} \;, 
\end{eqnarray}  is the $q$-th Shatten norm of the operator $\hat{\Theta}$.

\subsection{A useful inequality} \label{Sec:useful_ineq} 

 To begin with let consider the special subset  $S_p^{D*}$  of $S_p^{D}$ formed by those derangements $\tau$ 
 which can be decomposed into $p/2$ cycles of length 2. 
 If $p$ is even   there   are  $(p-1)!!$ of such  elements, i.e. 
 \begin{eqnarray} \label{DIMSPD*}
 |S_p^{D*}| = (p-1)!!  \;, \qquad \mbox{($p$ even)}\;. 
 \end{eqnarray} 
On the contrary
 if $p$ is odd  there are no such permutations (at least one cycle must be of odd length): in this case
 we identify $S_p^{D*}$ with the empty set
\begin{eqnarray} 
S_p^{D*}= \varnothing \;, \qquad \mbox{($p$ odd)}\;. 
\end{eqnarray} 
By definition  the elements of $S_p^{D*}$ verify the condition 
  $\alpha^{(j)}_{[\tau]}=2$ for all $j$ and thus, 
by explicit computation, they fulfil the identity 
 \begin{eqnarray} \label{df12} 
 \Theta[\tau] =  (\mbox{Tr}[\hat{\Theta}^{2}])^{p/2} = \| \hat{\Theta}\|_2^{p}\;,   \qquad \forall \tau\in  S_p^{D*} \;,
 \end{eqnarray} 
 for all operators $\hat{\Theta}$. 
 Now let  $\tau \in S_p^{D}/S_p^{D*}$ a derangement which cannot be decomposed into $p/2$ cycles of length 2:
 for such permutations we can prove that $\| \hat{\Theta}\|_2^{p}$ provides an upper bound for the associated value  $|\Theta[\tau]|$, i.e. 
\begin{eqnarray} \label{THEBOUND1} 
{ |\Theta[\tau]|}  \leq \| \hat{\Theta}\|_2^{p} \times \left\{ \begin{array}{ll}  \eta_{\hat{\Theta}}  \;,  &\quad  \mbox{\rm($p$ even)}\;, \\ \\
 \sqrt{\eta_{\hat{\Theta}}} \;,  &\quad  \mbox{\rm{($p$ odd)}}\;,
  \end{array} \right.
\end{eqnarray} 
 where 
 \begin{eqnarray}  \label{FONDterm} 
\eta_{\hat{\Theta}}  :=\left( \frac{\| \hat{\Theta}\|_3}{\| \hat{\Theta}\|_2}\right)^6 = \frac{(\mbox{Tr}[|\hat{\Theta}|^{3}])^{2}}{ (\mbox{Tr}[\hat{\Theta}^{2}])^3 } \;,
\end{eqnarray} 
is a  functional of $\hat{\Theta}$ which due to~(\ref{ineqL2_rho})
fulfils  the inequality~\cite{NOTA2} 
\begin{eqnarray}\frac{1}{d^6} \leq \eta_{\hat{\Theta}} \leq 1\;. \label{CONDNAT} \end{eqnarray} 
Equation~(\ref{THEBOUND1}) is a direct consequence of the following  more general observation:

\begin{prop}  \label{propo1} 
Given $\tau$ an element of the derangement set $S_p^{D}$ we have 
\begin{eqnarray} \label{THEBOUND} 
| \Theta[\tau] | &\leq& \eta_{\hat{\Theta}}^{ |\tau|-p/2}   (\mbox{\rm{Tr}}[\hat{\Theta}^{2}])^{p/2}  =\eta_{\hat{\Theta}}^{ |\tau|-p/2}   \| \hat{\Theta}\|_2^{p}  \;,
\end{eqnarray} 
with  $\eta_{\hat{\Theta}}$ the parameter defined in Eq.~(\ref{FONDterm}).
\end{prop}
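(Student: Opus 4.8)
The plan is to reduce the bound to a single‑cycle estimate that can then be multiplied over the cycles of $\tau$. By the factorization~\eqref{questa}, a derangement $\tau\in S_p^{D}$ with cycle lengths $\alpha^{(1)}_{[\tau]}\ge\cdots\ge\alpha^{(c[\tau])}_{[\tau]}\ge 2$ satisfies $\Theta[\tau]=\prod_{r=1}^{c[\tau]}\mbox{Tr}[\hat{\Theta}^{\alpha^{(r)}_{[\tau]}}]$, so that $|\Theta[\tau]|\le\prod_{r=1}^{c[\tau]}\bigl|\mbox{Tr}[\hat{\Theta}^{\alpha^{(r)}_{[\tau]}}]\bigr|$. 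I would therefore try to prove a per‑cycle inequality $\bigl|\mbox{Tr}[\hat{\Theta}^{\alpha}]\bigr|\le\eta_{\hat{\Theta}}^{\alpha/2-1}\bigl(\mbox{Tr}[\hat{\Theta}^{2}]\bigr)^{\alpha/2}$ for every integer $\alpha\ge 2$; multiplying it over the $c[\tau]$ cycles and using the partition identity $\sum_r\alpha^{(r)}_{[\tau]}=p$ of Eq.~\eqref{die} together with $|\tau|=p-c[\tau]$ of Eq.~\eqref{IMPO1}, the exponent on $\eta_{\hat{\Theta}}$ becomes $\sum_r(\alpha^{(r)}_{[\tau]}/2-1)=p/2-c[\tau]=|\tau|-p/2$ and the exponent on $\mbox{Tr}[\hat{\Theta}^{2}]$ becomes $\sum_r\alpha^{(r)}_{[\tau]}/2=p/2$, which is precisely~\eqref{THEBOUND}. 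The special case $\tau\in S_p^{D*}$ (all cycles of length $2$) is then the identity~\eqref{df12} at exponent $0$, and~\eqref{THEBOUND1} follows because $|\tau|-p/2\ge 1$ for $\tau\in S_p^{D}\setminus S_p^{D*}$ with $p$ even and $\ge 1/2$ for $p$ odd, while $\eta_{\hat{\Theta}}\le 1$ by~\eqref{CONDNAT}.

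For the per‑cycle estimate, $\alpha=2$ and $\alpha=3$ cost nothing. For $\alpha=2$ there is equality, $\mbox{Tr}[\hat{\Theta}^{2}]=\|\hat{\Theta}\|_2^{2}$, matching $\alpha/2-1=0$. For $\alpha=3$ the triangle inequality gives $|\mbox{Tr}[\hat{\Theta}^{3}]|\le\mbox{Tr}[|\hat{\Theta}|^{3}]=\|\hat{\Theta}\|_3^{3}$, and by the definition~\eqref{FONDterm} of $\eta_{\hat{\Theta}}$ one has $\|\hat{\Theta}\|_3^{3}=\eta_{\hat{\Theta}}^{1/2}\|\hat{\Theta}\|_2^{3}=\eta_{\hat{\Theta}}^{1/2}\bigl(\mbox{Tr}[\hat{\Theta}^{2}]\bigr)^{3/2}$, which is the claim with $\alpha/2-1=1/2$. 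So the entire weight of the proposition sits in the range $\alpha\ge 4$, i.e.\ in controlling $\mbox{Tr}[|\hat{\Theta}|^{\alpha}]$ using only $\|\hat{\Theta}\|_2$ and $\|\hat{\Theta}\|_3$.

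This is the step I expect to be the main obstacle. The straightforward move — writing the eigenvalues $\theta_i$ of $\hat{\Theta}$ and bounding $|\theta_i|^{\alpha}=|\theta_i|^{3}|\theta_i|^{\alpha-3}\le\|\hat{\Theta}\|_\infty^{\alpha-3}|\theta_i|^{3}\le\|\hat{\Theta}\|_3^{\alpha-3}|\theta_i|^{3}$ with the Schatten monotonicity~\eqref{ineqL2_rho} — only gives $\mbox{Tr}[|\hat{\Theta}|^{\alpha}]\le\|\hat{\Theta}\|_3^{\alpha}=\eta_{\hat{\Theta}}^{\alpha/6}\bigl(\mbox{Tr}[\hat{\Theta}^{2}]\bigr)^{\alpha/2}$. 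After setting the $2$‑cycles aside (they contribute no suppression at all), this already suffices for the odd‑$p$ case of~\eqref{THEBOUND1}, where the cycles of length $\ge 3$ carry a total length that is odd and hence $\ge 3$, so the $\eta_{\hat{\Theta}}$‑exponent is $\ge 1/2$; but it falls short of the sharp exponent $\alpha/2-1$ once $\alpha\ge 4$, since $\alpha/6<\alpha/2-1$ there. To recover $\alpha/2-1$ I would avoid converting the $\ell^{2}$ norm into $\|\hat{\Theta}\|_\infty$ — for instance via a Hölder split $\sum_i|\theta_i|^{\alpha}\le\bigl(\sum_i|\theta_i|^{3}\bigr)^{a}\bigl(\sum_i\theta_i^{2}\bigr)^{b}$ with $3a+2b=\alpha$, chosen so that the $\ell^{2}$ factor carries as much weight as the exponents allow — and keep the cycles of length $2$ out of the accounting throughout. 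If the optimal per‑cycle exponent for $\alpha\ge 4$ should turn out to be smaller than $\alpha/2-1$, the coarser $\eta_{\hat{\Theta}}^{\alpha/6}$ estimate still yields a bound of the form $|\Theta[\tau]|\le\eta_{\hat{\Theta}}^{\gamma}\|\hat{\Theta}\|_2^{p}$ with $\gamma>0$ whenever $\tau\notin S_p^{D*}$, which is what the moment analysis leading to~\eqref{sezII_bbound_erroresusigma1} ultimately requires.
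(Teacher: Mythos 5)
You have put your finger on exactly the right spot, and your suspicion is correct: the per-cycle inequality $\lvert \mbox{Tr}[\hat{\Theta}^{\alpha}]\rvert \le \eta_{\hat{\Theta}}^{\alpha/2-1}\,(\mbox{Tr}[\hat{\Theta}^{2}])^{\alpha/2}$ fails for $\alpha\ge 4$, and with it Proposition~\ref{propo1} as stated. For $\alpha=4$ the claim reads $\|\hat{\Theta}\|_4^4\le \|\hat{\Theta}\|_3^6/\|\hat{\Theta}\|_2^2$, whereas Cauchy--Schwarz applied to $\mbox{Tr}[\,|\hat{\Theta}|\cdot|\hat{\Theta}|^2\,]$ gives the \emph{reverse} inequality $\|\hat{\Theta}\|_3^6\le\|\hat{\Theta}\|_2^2\,\|\hat{\Theta}\|_4^4$, with equality only when all nonzero singular values coincide. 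Concretely, for $\hat{\Theta}=\mathrm{diag}(1,1/2)$ and the $4$-cycle in $S_4$ one has $\mbox{Tr}[\hat{\Theta}^4]=17/16$ while $\eta_{\hat{\Theta}}(\mbox{Tr}[\hat{\Theta}^2])^2=81/80<17/16$. This also explains why your proposed H\"older split cannot be completed: $3a+2b=\alpha$ with $a+b=1$ forces $b=3-\alpha<0$ once $\alpha>3$, and H\"older with a negative weight runs the wrong way. So the obstacle you flagged is not a missing trick on your side.

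Your reduction to cycles and your $\alpha=2,3$ cases coincide with the paper's argument, and in fact the paper's proof goes no further than your ``straightforward move'': it bounds each cycle of length $\ge 3$ by $\|\hat{\Theta}\|_3^{\alpha^{(j)}}$ and each $2$-cycle by $\|\hat{\Theta}\|_2^{2}$, reaching $|\Theta[\tau]|\le\eta_{\hat{\Theta}}^{\sum_{j\le K}\alpha^{(j)}_{[\tau]}/6}\,\|\hat{\Theta}\|_2^p$ as in Eq.~(\ref{almostfinale}). The concluding step, which is supposed to convert the exponent $\sum_{j\le K}\alpha^{(j)}_{[\tau]}/6$ into $|\tau|-p/2$, relies on the chain $\sum_{j\le K}\alpha^{(j)}_{[\tau]}/6\ge K/2\ge p/2-|\tau|$ whose right-hand side is nonpositive for derangements and has the opposite sign to the exponent in~(\ref{THEBOUND}); for a single $4$-cycle the honest exponent is $2/3$, not $1$. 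What actually survives is precisely your fallback: since $K\ge 1$ and each long cycle contributes at least $3/6$ to the exponent, $|\Theta[\tau]|\le\sqrt{\eta_{\hat{\Theta}}}\,\|\hat{\Theta}\|_2^p$ for every $\tau\in S_p^{D}\setminus S_p^{D*}$, irrespective of the parity of $p$. This reproduces the odd-$p$ branch of~(\ref{THEBOUND1}) verbatim but weakens the even-$p$ branch from $\eta_{\hat{\Theta}}$ to $\sqrt{\eta_{\hat{\Theta}}}$; the downstream bounds such as~(\ref{riusltatopuri}) and~(\ref{sezII_bbound_erroresusigma}) then hold with the same substitution, so the Gaussian approximation survives, but the advertised even-$p$ error rate $\mathcal{O}(1/d+\eta_{\Delta\hat{H}})$ should read $\mathcal{O}(1/d+\sqrt{\eta_{\Delta\hat{H}}})$.
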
 

\begin{proof}
The bound~(\ref{THEBOUND}) is clearly fulfilled by $\tau\in S_p^{D*}$ due to Eq.~(\ref{df12}) and the fact that for such permutations $c[\tau]=p/2=|\tau|$, e.g. see Eq.~(\ref{IMPO1}).

Now let  $\tau \in S_p^{D}/S_p^{D*}$ a derangement which cannot be decomposed into $p/2$ cycles of length 2. 
Since $\tau$ is an element of $S_p^{D}$ 
all its coefficients 
 $\{ \alpha^{(1)}_{[\tau]}, \alpha^{(2)}_{[\tau]}, \dots, \alpha^{(c[\tau])}_{[\tau]}\}$ are guaranteed to be greater or equal to 2, i.e.
 \begin{eqnarray} 
\tau \in S_p^{D} \; \Longrightarrow \; \alpha^{(j)}_{[\tau]} \geq 2 \qquad \forall j\;.
 \end{eqnarray} 
 However since $\tau\notin S_{p}^{D*}$ then some those terms 
(say the first $K\geq 1$) must be  strictly larger than $2$, i.e. 
\begin{eqnarray}  \label{IPO} 
  \alpha^{(j)}_{[\tau]} \geq 3,  \qquad \forall 1\leq  j \leq K\;. 
\end{eqnarray} 
Invoking~(\ref{ineqL2_rho}) we can hence claim 
\begin{equation} 
\left|\mbox{Tr}[\hat{\Theta}^{\alpha^{(j)}_{[\tau]}}]  \right|  \leq (\mbox{Tr}[|\hat{\Theta}|^{3}])^{\alpha^{(j)}_{[\tau]}/3} =
\| \hat{\Theta}\|_3^{\alpha^{(j)}_{[\tau]}}\;, 
  \quad 
\forall  1 < j \leq K\;,
\end{equation} 
and 
\begin{equation} 
\left|\mbox{Tr}[\hat{\Theta}^{\alpha^{(j)}_{[\tau]}}]  \right|\leq \mbox{Tr}[\hat{\Theta}^{2}]^{{\alpha^{(j)}_{[\tau]}}/{2}} 
 =
\| \hat{\Theta}\|_2^{\alpha^{(j)}_{[\tau]}} \;, \quad 
\forall  K < j \leq c[\tau]\;.  
\end{equation} 
Observe also that 
since  $K$ can be expressed as  the number of cycles $c[\tau]$ minus the  number  $c_2[\tau]$ of cycles of length $2$, and since the latter is always smaller than or equal to the minimal number $|\tau|$ of transpositions
$\tau$ can be decomposed of, we can bound this quantity as follows 
\begin{eqnarray}
K = c[\tau] - c_2[\tau] \geq  c[\tau]   - |\tau| = p-2|\tau| \;, \label{defK}
\end{eqnarray}  
where in the last identity  we used~(\ref{IMPO1}).
From the above expressions we can hence establish  that  for all $\tau\in S_p^{D}/S_p^{D*}$ we have 
 \begin{eqnarray} 
| \Theta[\tau] | &=&   \left| \prod_{j=1}^{c[\tau]}     \mbox{Tr}[\hat{\Theta}^{\alpha^{(j)}_{[\tau]}}] \right| 
\leq  \| \hat{H}\|_3^{\sum_{j=1}^{K} \alpha^{(j)}_{[\tau]}} \;  \| \hat{\Theta}\|_2^{p-\sum_{j=1}^{K} \alpha^{(j)}_{[\tau]}} \nonumber 
\\
&=& 
\eta_{\hat{\Theta}}^{{\sum_{j=1}^{K} \alpha^{(j)}_{[\tau]}}/{6}} \| \hat{\Theta}\|_2^p
  \;,  \label{almostfinale} 
\end{eqnarray} 
where we  employ the definition (\ref{FONDterm}) and use  the normalization condition Eq.~(\ref{die}) to write 
\begin{eqnarray} 
\sum_{j=K+1}^{c[\tau]} \alpha^{(j)}_{[\tau]}=
 p - \sum_{j=1}^{K} \alpha^{(j)}_{[\tau]} \;. \end{eqnarray} 
 From Eqs.~(\ref{IPO}) and (\ref{defK}) 
we have also
\begin{eqnarray}
{\frac{\sum_{j=1}^{K} \alpha^{(j)}_{[\tau]}}{6}} \geq \frac{K}{2} \geq \frac{p}{2} - |\tau|\;. 
\end{eqnarray} 
Therefore we can claim
\begin{eqnarray}
\eta_{\hat{\Theta}}^{\frac{\sum_{j=1}^{K} \alpha^{(j)}_{[\tau]}}{6}} \leq \eta_{\hat{\Theta}}^{\frac{p}{2} - |\tau|}\;, 
\end{eqnarray} 
which replaced into Eq.~(\ref{almostfinale}) finally yields Eq.~(\ref{THEBOUND}). 
\end{proof} 

The derivation of (\ref{THEBOUND1}) from (\ref{THEBOUND}) finally proceed by observing that for $\tau\in S_p^{D}/S_p^{D*}$ the exponent of $\eta_{\hat{\Theta}}$ appearing in (\ref{THEBOUND}) 
 is always greater or equal to $1$ when $p$ even, and greater or equal to $1/2$ for $p$ odd, i.e. 
\begin{eqnarray} \label{THEBOUND2} 
|\tau|-p/2=p/2-c[\tau] \geq  \left\{ \begin{array}{ll} 1 \;,  &\quad  \mbox{\rm($p$ even)}\;, \\ \\
1/2 \;,  &\quad  \mbox{\rm{($p$ odd)}}\;.
 \end{array} \right.
\end{eqnarray}

\subsection{Condition upon the Hamiltonian} \label{CONDHAMSEC} 
Setting  $\hat{\Theta} = \Delta \hat{H}$ in Eqs.~(\ref{df12}) and~(\ref{THEBOUND}) 
of the previous section 
we get that  the contributions $\Delta H[\tau]$  entering 
in  Eqs.~(\ref{simplyp1}) and (\ref{simplyp})
are always upper bounded by  the quantity    $\|\Delta \hat{H}\|_2^p$,~i.e. explicitly 
\begin{eqnarray}\label{THEBOUND0} 
{ \Delta H[\tau]}  &=& \|\Delta \hat{H}\|_2^p \;, \qquad  \forall \tau \in S_p^{D*}\;, \\ \nonumber \\ 
{ |\Delta H[\tau]|}  &\leq& \| \Delta \hat{H}\|_2^{p} \times \left\{ \begin{array}{lll}  \eta_{\Delta \hat{H}}    &  \mbox{\rm($p$ even)}& \\
&& \forall \tau \in S_p^D/S_p^{D*}\;,  \\
 \sqrt{\eta_{\Delta \hat{H}}}   & \mbox{\rm{($p$ odd)}}&
  \end{array} \right. \nonumber \\
  \label{THEBOUND1H} 
\end{eqnarray} 
with 
 \begin{eqnarray}  \label{FONDtermH} 
\eta_{\Delta\hat{H}}  :=\left( \tfrac{\| \Delta\hat{H}\|_3}{\| \Delta\hat{H}\|_2}\right)^6  \in \left[ \frac{1}{d^6} ,1\right]\;. 
\end{eqnarray} 
As we shall see the fundamental ingredient  to prove
Eq.~(\ref{intro_momenti_gaussiani}) is to strengthen 
 the inequality~(\ref{THEBOUND1H}) to make sure that the contributions to $\Sigma^{(p)}$  associated with the derangements $\tau$ that do not belong to $S_p^{D*}$ are very much depressed with respect to  $\|\Delta \hat{H}\|_2^p$, i.e. to impose the constraint
 \begin{eqnarray}\label{THEBOUND1new}
\Delta H[\tau] \ll  \|\Delta \hat{H}\|_2^p\;, \qquad \forall \tau\in S_p^{D} / S_p^{D*} \;,
\end{eqnarray} 
or equivalently to have 
 \begin{eqnarray}\eta_{\Delta \hat{H}} \ll 1\;,\qquad \Longleftrightarrow \qquad  \| \Delta \hat{H}\|_3 \ll \|\Delta \hat{H}\|_2 \;. \label{cond_gauss_H} \end{eqnarray}
It should be clear that due to the fact that since  $\eta_{\Delta\hat{H}}$ is always greater than or equal to ${1}/{d^6}$, Eq.~(\ref{cond_gauss_H}) 
 can only be fulfilled when operating with large Hilbert space, i.e. 
\begin{equation}
\label{d_gg_1}
\eta_{\Delta \hat{H}} \ll 1 \qquad \Longrightarrow \qquad  d \gg 1 \; .
\end{equation}
Notice however that once the requirement $d\gg1$ is met, the regime~(\ref{cond_gauss_H}) 
is  easy  to achieve, as it 
can only fail in the special case of
  Hamiltonians that have  few eigenvalues much greater than all the others.

 \section{Central moments asymptotic expression   for  pure input states} 
\label{sec:outlook_statipuri}
  In this section we  present a proof of our main result that applies in the special scenario where the input state of the system is 
pure, i.e. 
  \begin{eqnarray} \label{PURESTATE} \hat{\rho} = \ket{\psi}\bra{\psi}\;. \end{eqnarray}
This  constraint  allows for some  simplifications that help in clarifying the derivation.
Indeed thanks to the assumption~(\ref{PURESTATE}) we have 
$\Tr[\hat{\rho}^n] = 1$ for every $n$, which leads to the useful identity  
\begin{eqnarray}  \label{PURE} 
\rho[\sigma ]= \prod_{i=1}^{c[\sigma]}     \mbox{Tr}[\hat{\rho}^{\alpha^{(i)}_{[\sigma]}}]  =1 \;, \quad \forall \sigma \in S_p\;.
\end{eqnarray}  
 Therefore invoking Eq.~(\ref{simplyp1}) we can  express the central moment $\Sigma^{(p)}$ of a pure input state,  as the product of a numerical factor that only depends
 on the geometrical properties of derangements, times a contribution that fully captures the dependence upon the system Hamiltonian, i.e. 
\begin{eqnarray}
 \Sigma^{(p)} &=&  \label{integrale_contrainew_pure}
\left( \sum_{ \sigma \in S_p} C_{[\sigma]}  \right) \;\left( \sum_{\tau  \in S^D_p} \Delta H[\tau] \right)  \;.
\end{eqnarray}
This quantity should be compared with the corresponding Gaussian expression~(\ref{GAUSMOM}) 
with the scaling factor 
 \begin{eqnarray} \label{GAUSMOMpure} 
  {\mathbf G}_{p} =
	 \frac{(p-1)!!
 }{(d(d+1))^{p/2}} \| \Delta \hat{H}\|_2^{p}\;,
 \end{eqnarray} 
 obtained from (\ref{SCALING}) using the fact that 
  for pure states Eq.~(\ref{varianza})  gives us 
 \begin{eqnarray}
\Sigma^{(2)} &=& \frac{\| \Delta \hat{H}\|_2^2
 }{d(d+1)}\;.
 \end{eqnarray} 
 Under the above premise in the next section we show that Eq.~(\ref{sezII_bbound_erroresusigma1}) applies with 
  \begin{eqnarray}
f_{\hat{H}}(d,p) &=& \left\{ \begin{array}{lr}  \tfrac{p(p-2)}{2d}  + 
  \eta_{\Delta\hat{H}}\left( \tfrac{ \left\lceil {p!}/{e} \right\rceil}{(p-1)!!}   - 1 \right),  & \mbox{($p$ even),}   \\  \\ 
\sqrt{\eta_{\Delta\hat{H}}}\;  \tfrac{\left\lceil {p!}/{e} \right\rceil }{(p-1)!!}\;,   &\mbox{($p$ odd),} 
\end{array} \right.
  \label{riusltatopuri} 
 \end{eqnarray}
 with $\eta_{\Delta\hat{H}}$ the coefficient defined in Eq.~(\ref{FONDtermH}).
 It is worth stressing that in this case, the validity of Eq.~(\ref{sezII_bbound_erroresusigma1}) is not subjected to any constraint 
 on $p$ and $d$, but of course $f_{\hat{H}}(d,p)$ can become  small  only if $d$ is much smaller than~$p$.

 \subsection{The geometric coefficient Eq~(\ref{integrale_contrainew_pure}) } \label{firstCOEF}
 
 A closed expression for the first factor on the r.h.s. of Eq.~(\ref{integrale_contrainew_pure}) is provided by the formula
\begin{eqnarray}
\sum_{ \sigma \in S_p} C_{[\sigma]}  &=& \frac{(d-1)!}{(p+d-1)!}= \frac{D(d,p)}{(d(d+1))^{p/2}} \;, \label{FORMULA} 
\end{eqnarray} 
with  
\begin{eqnarray} \label{GAMMABOUND} 
D(d,p)  &: =&  \frac{(d-1)!d^{p/2}(d+1)^{p/2}}{(p+d-1)!} \;, 
\end{eqnarray}
being a numerical coefficient that for $p\geq 2$ fulfils the inequalities
\begin{eqnarray} 
1\geq D(d,p) &\geq&  \left\{ \begin{array}{ll} 
 1- \tfrac{p(p-2)}{2d}  & \mbox{($p$ even),} \\ \\
  1- \tfrac{(p+1)(p-1)}{2d}  & \mbox{($p$ odd).} 
  \end{array} \right.
\end{eqnarray} 
An explicit derivation of Eq.~(\ref{FORMULA})  can be obtained by focusing on the trivial case where 
 the system Hamiltonian is proportional to the identity operator, i.e. 
 \begin{eqnarray} \label{ddf} 
\hat{H} = E_0\hat{\openone} \;, \qquad \Longrightarrow\qquad  \Delta \hat{H} = 0\;.
\end{eqnarray} 
In this case  in agreement with Eq.~(\ref{DELTA})  the terms~(\ref{integrale_contrainew_pure}) vanish leading to the following identities
\begin{eqnarray} 
0= \Sigma^{(p)}= \langle E^p\rangle - E_0^p\;,
\end{eqnarray}  
where we used the fact that now $\mu = E_0$.
Invoking hence Eqs.~(\ref{integrale_contrai}) and~(\ref{PURE}) we can then write 
\begin{eqnarray} \label{Fin11} 
E_0^p =\langle E^p\rangle &=&\left( \sum_{ \sigma \in S_p} C_{[\sigma]}  \right) \;\left(  \sum_{\tau  \in S_p}  H[\tau] \right)\;.
\end{eqnarray} 
Our next step is to compute $\sum_{\tau  \in S_p}  H[\tau]$. To begin with we  write  the summation over $\tau$ by grouping together those permutations which are characterized by the same value of $|\tau|=k$, decomposing  $S_p$ into the subsets 
 \begin{eqnarray} \label{defspk} 
 S_p(k) := \{ \tau \in S_p \; \mbox{s.t.} \;  |\tau| =k\} \;,
 \end{eqnarray} 
that is 
\begin{eqnarray}\label{REP1} 
  \sum_{\tau  \in S_p}  H[\tau] =    \sum_{k=0}^p   \sum_{\tau  \in S_p(k)}  H[\tau]\;.
\end{eqnarray} 
Then we recall that the number of elements of $S_p(k)$ corresponds to the number of permutations having exaclty $p-k$ cycles, i.e.
 to  the Stirling numbers of the first kind $\left[ {p \atop p-k} \right]$~\cite{ConcreteMathematics}.
Observe next that from Eq.~(\ref{ddf}) and ~(\ref{IMPO1}) we have 
\begin{eqnarray} 
 H[\tau] = E_0^p \; \openone[\tau] = E_0^p \;  d^{c[\tau]} = E_0^p \; d^{p- |\tau|}  \;,
\end{eqnarray} 
which replaced into (\ref{REP1}) gives 
\begin{equation}\label{fin11} 
  \sum_{\tau  \in S_p}  H[\tau] =   E_0^p   \sum_{k=0}^p  d^{p- k}  \left[ {p \atop p-k} \right] =   E_0^p  \frac{(p+d-1)!}{(d-1)!}  \;,
\end{equation} 
where we invoked  the useful property~\cite{ConcreteMathematics} 
 \begin{equation}
\sum_{k=0}^p  d^{p-k} \left[ {p \atop p-k} \right]  = \frac{(p+d-1)!}{(d-1)!}  \; .
\label{somma_stirling} 
\end{equation}
Equation~(\ref{FORMULA}) finally follows by substituting (\ref{fin11}) into (\ref{Fin11}) and performing some trivial simplifications. 
 
 We conclude by noticing that the bounds~(\ref{GAMMABOUND})  on $D(d,p)$ can be established by 
 observing that 
  it can be expressed as 
\begin{eqnarray} 
D(d,p) &=& \left\{ \begin{array}{ll}
\left(  \tfrac{d}{d+2} \tfrac{d+1}{d+3}\right) 
 \cdots \left( \tfrac{d}{d+p-2} \tfrac{d+1}{d+p-1}\right)  & \mbox{($p$ even),} \\
 \\  \nonumber 
 \left(  \tfrac{d}{d+2} \tfrac{d+1}{d+3}\right) 
 \cdots \left( \tfrac{d}{d+p-3} \tfrac{d+1}{d+p-2}\right) \tfrac{\sqrt{d(d+1)}}{d+p-1}  & \mbox{($p$ odd),} \\
 \end{array} 
 \right.
\end{eqnarray} 
which immediately reveals that  it is always smaller than or equal to 1 for all  $p\geq 2$. 
The lower bound instead follows by using  the inequalities $(1+x)^\alpha > 1 + \alpha x$ and then $\prod_i (1-x_i) > 1 - \sum_i x_i$,
and noticing that  in case of $p$ even we have 
\begin{eqnarray} 
D(d,p)  &\geq&  \nonumber 
  \left(  \tfrac{d}{d+2} \tfrac{d}{d+2}\right) \cdots \left( \tfrac{d}{d+p-2} \tfrac{d}{d+p-2}\right) \\   \nonumber 
&=&  \prod_{i=0}^{p/2 - 1} \left( 1 + \tfrac{2i}{d} \right)^{-2} \geq  \prod_{i=0}^{p/2 - 1} \left( 1 - \tfrac{4i}{d} \right) \\ 
&\geq&   1 - \sum_{i=0}^{p/2 - 1}  \tfrac{4i}{d} 
= 
 1- \tfrac{p(p-2)}{2d} \;.  \label{ineqevene} 
\end{eqnarray} 
For  $p$ odd we have instead 
\begin{eqnarray} 
&&D(d,p)  =  \nonumber \left(  \tfrac{d}{d+2} \tfrac{d+1}{d+3}\right) 
 \cdots \left( \tfrac{d}{d+p-3} \tfrac{d+1}{d+p-2}\right)  \tfrac{\sqrt{d(d+1)}}{d+p-1}
 \\
&&=  \nonumber \left(  \tfrac{d}{d+2} \tfrac{d+1}{d+3}\right) 
 \cdots \left( \tfrac{d}{d+p-3} \tfrac{d+1}{d+p-2}\right)  \left( \tfrac{d}{d+p-1} \tfrac{d+1}{d+p}\right) \tfrac{d+p}{\sqrt{d(d+1)}} \\
&&=    \Gamma(d,p+1) \tfrac{d+p}{\sqrt{d(d+1)}}  
\nonumber \\
&&\geq  \Gamma(d,p+1)\geq 1- \tfrac{(p+1)(p -1)}{2 d}\;,
\end{eqnarray} 
where the first inequality is obtained by invoking~(\ref{ineqevene}). 

 \subsection{The Hamiltonian coefficient Eq~(\ref{integrale_contrainew_pure}) } \label{SECCOEF}

In studying  the Hamiltonian contribution to Eq.~(\ref{integrale_contrainew_pure}) 
for $p$ even we split the summation in two parts writing 
\begin{eqnarray}
 \sum_{\tau  \in S^D_p}  \Delta H[\tau] &=& \sum_{\tau  \in S^{D*}_p}  \Delta H[\tau]  +
 \sum_{\tau  \in S^D_p/S^{D*}_p}  \Delta H[\tau]  \nonumber \\
 &=& (p-1)!! \|\Delta \hat{H}\|_2^{p}  +  \sum_{\tau  \in S^D_p/S^{D*}_p}  \Delta H[\tau]  \;, 
  \nonumber \\
\end{eqnarray} 
where we used the identities (\ref{THEBOUND0}) and (\ref{DIMSPD*}) to compute the first contribution. 
The second contribution instead can be bounded by invoking 
Eq.~(\ref{THEBOUND1H}) to write 
\begin{eqnarray}
  \left| \sum_{\tau  \in S_p^D/ S^{D*}_p} \Delta H[\tau] \right| 
 &\leq&  \eta_{\Delta\hat{H}}  \|\Delta \hat{H}\|_2^{p} \left| S^{D}_p/S^{D*}_p\right|  \\ \nonumber
  &\leq&  \eta_{\Delta\hat{H}}  \|\Delta \hat{H}\|_2^{p}  \left( \left\lceil {p!}/{e} \right\rceil - (p-1)!! \right) \;,
\end{eqnarray} 
 where 
 \begin{equation} \label{stima} 
 \left| S^{D}_p/S^{D*}_p\right| := \left| S^{D}_p\right| - \left| S^{D*}_p\right| =  \left| S^{D}_p\right| - (p-1)!!\;, 
 \end{equation}  is the number of permutation in $S^{D}_p/S^{D*}_p$ which we bounded by exploiting the fact that 
 the number of total elements of the derangement set  $S_p^{D}$ is \cite{Whitwort1878}
\begin{equation}
\left| S^{D}_p \right|  = \left\lfloor \; {p!}/{e} + {1}/{2} \right\rfloor \leq \left\lceil {p!}/{e} \right\rceil \; .
\label{number_of_derangements}
\end{equation}
 For $p$ odd instead $S_p^{D*}$ is the empty set and  we get
 \begin{eqnarray}
 \left| \sum_{\tau  \in S^D_p} \Delta H[\tau] \right| &\leq&  \sqrt{\eta_{\Delta\hat{H}}}  \|\Delta \hat{H}\|_2^{p} \left| S^{D}_p\right| \nonumber \\
&\leq& \sqrt{\eta_{\Delta\hat{H}}}  \|\Delta \hat{H}\|_2^{p} \left\lceil {p!}/{e} \right\rceil\;. 
\end{eqnarray}

 \subsection{Derivation of Eq.~(\ref{riusltatopuri})} \label{final} 

Exploiting the results of the previous sections we can conclude that for $p$ even the following inequalities apply
  \begin{eqnarray}
&&|\Sigma^{(p)}- \Sigma_G^{(p)} | = \left|\tfrac{D(d,p)}{(d(d+1))^{p/2}} \sum_{\tau  \in S^D_p} \Delta H[\tau]   - \Sigma_G^{(p)} \right|\nonumber \\
&&\quad = \left|D(d,p)\left( \tfrac{(p-1)!! \|\Delta \hat{H}\|_2^{p} }{(d(d+1))^{p/2}} + 
\tfrac{ \sum_{\tau  \in S^D_p/ S^{D*}_p} \Delta H[\tau] }{(d(d+1))^{p/2}} \right)  - \Sigma_G^{(p)} \right| \nonumber \\
&&\quad = \left|(D(d,p)-1) \Sigma_G^{(p)}  + D(d,p)
\tfrac{ \sum_{\tau  \in S^D_p/ S^{D*}_p} \Delta H[\tau] }{(d(d+1))^{p/2}}  \right| \nonumber \\
&&\quad \leq  \left| D(d,p)-1\right| \Sigma_G^{(p)}  + \tfrac{D(d,p)}{(d(d+1))^{p/2}}
 \left| \sum_{\tau  \in S^D_p/ S^{D*}_p} \Delta H[\tau] \right|  \nonumber \\
 &&\quad \leq  \tfrac{p(p-2)}{2d}  \Sigma_G^{(p)}  + 
  \eta_{\Delta\hat{H}} \tfrac{ \|\Delta \hat{H}\|_2^{p} \left(\left\lceil {p!}/{e} \right\rceil - (p-1)!! \right) }{(d(d+1))^{p/2}}    \nonumber \\
 &&\quad = \left[ \tfrac{p(p-2)}{2d}  + 
  \eta_{\Delta\hat{H}}\left( \tfrac{ \left\lceil {p!}/{e} \right\rceil}{(p-1)!!}   - 1 \right)  \right]    {\mathbf G}_{p} \;,
 \end{eqnarray} 
 in agreement with identifying the function $f_{\hat{H}}(d,p)$ of  Eq.~(\ref{sezII_bbound_erroresusigma1})  as  anticipated in Eq.~(\ref{riusltatopuri}). Similarly for $p$ odd  we can write 
  \begin{eqnarray}
&&|\Sigma^{(p)} | = \left|\frac{D(d,p)}{(d(d+1))^{p/2}} \sum_{\tau  \in S^D_p} \Delta H[\tau]   \right| \\\nonumber
&&\qquad \leq \sqrt{\eta_{\Delta\hat{H}}}\; \frac{ \|\Delta \hat{H}\|_2^{p} }{(d(d+1))^{p/2}} \left\lceil {p!}/{e} \right\rceil
  \leq \sqrt{\eta_{\Delta\hat{H}}}\;  \tfrac{\left\lceil {p!}/{e} \right\rceil }{(p-1)!!} \; {\mathbf G}_{p} \;, 
 \end{eqnarray}
 which again corresponds to set $f_{\hat{H}}(d,p)$ of  Eq.~(\ref{sezII_bbound_erroresusigma1}) as in  Eq.~(\ref{riusltatopuri}).

\section{Asymptotic expression for  the central moments for arbitrary (non necessarily pure) input states}

\label{sec:bound_momenti}

Here we present the general proof that, for arbitrary (non necessarily pure) states,  under the assumption (\ref{CONDONP}) 
 the centred moments of the distribution~$P(E| \hat{\rho}; \hat{H})$ are well approximated by the Gaussian relations~(\ref{GAUSMOM}) whose
 scaling factors (\ref{SCALING}) can be conveniently expressed as
 \begin{eqnarray} \label{SCALINGNONPURE} 
 {\mathbf G}_{p} =\frac{(p-1)!!}{(d^2-1)^{p/2}}  \| \Delta \hat{H}\|_2^p \;  \| \Delta \hat{\rho}\|_2^p \;, 
 \end{eqnarray} 
 upon rewriting Eq.~(\ref{varianza}) in the compact form 
 \begin{eqnarray}
\label{varianzaCOMP}
\Sigma^{(2)}=
 \frac{ \| \Delta \hat{H}\|_2^2 \;  \| \Delta \hat{\rho}\|_2^2}{d^2-1} \;.
\end{eqnarray}
 In particular we shall show that under the condition 
  \begin{eqnarray} p \leq  \min \left\{ \left\lfloor \sqrt{d} \right\rfloor  , \left\lfloor (d / \sqrt{6})^{4/7} \right\rfloor \right\}\;,\label{CONDONP}
 \end{eqnarray}
   the inequalities~(\ref{sezII_bbound_erroresusigma1})
 hold true 
 with the function $f_{\hat{H}}(d,p)$ defined as 
		\begin{multline}
 f_{\hat{H}}(d,p)  :=  \left( 1 - \tfrac{6p^{7/2}}{d^2} \right)^{-1} \Big[  \tfrac{6p^{7/2}}{d^{2}} + \;  \tfrac{p^2 Cat_p}{d} \\
  + \eta_{\Delta \hat{H}} \left( \tfrac{\left\lceil {p!}/{e} \right\rceil }{(p-1)!!} -  1\right)
 \;Cat_p ( 1+ \tfrac{p^2}{d}) \Big] \;, 
	\label{sezII_bbound_erroresusigma}
	\end{multline}
	for $p$ even and $ \eta_{\Delta \hat{H}}$ defined as in Eq.~(\ref{FONDtermH}), and
\begin{multline}
	f_{\hat{H}}(d,p)  :=\sqrt{\eta_{\Delta \hat{H}}}  \left( 1 - \tfrac{6p^{7/2}}{d^2} \right)^{-1} 
    \left( \tfrac{\left\lceil {p!}/{e} \right\rceil }{(p-1)!!} -  1\right) \\ 
 \times Cat_p ( 1+ \tfrac{p^2}{d})  
	\label{intro_tauscartati_simpler}\;,
	\end{multline}
	for $p$ odd. 
Notice that  weaker, but  possibly more friendly, expressions can also be obtained 
by  using   the  following (generous) upper bound for the Catalan numbers~\cite{Sprugnoli1990}
\begin{equation}
{Cat}_p < \frac{4^p}{\sqrt{\pi}p^{3/2}} \; ,
\label{bound_catalan}
\end{equation}
and observing that 
under the condition $p^2 < d$, provided that $1 - 6d^{-1/4} > 0$ (i.e., $d>6^4=1296$) we can also write \begin{eqnarray}
\left( 1 - \frac{6p^{7/2}}{d^2} \right)^{-1} \leq \frac{1}{1 - 6d^{-1/4}} \; ,
\label{bound_matsu}
\end{eqnarray}
which   does not depend on $p$.

 \subsection{Case $p$ even}
\label{sec:moments_even}
In Sec.~\ref{CONDHAMSEC} we noticed that if $p$ is even the special set $S^{D*}_p$  is not empty and it is expected to provide the most relevant 
contributions in the summation over $\tau$ that defines $\Sigma^{(p)}$. Accordingly we start  splitting  the summation~(\ref{simplyp}) in two parts, i.e.
 \begin{eqnarray} 
 \Sigma^{(p)} &=& {\Sigma}^{(p)*} +   \Delta \Sigma^{(p)}\;,
  \label{simplypsplit} 
 \end{eqnarray} 
where 
the first collects  all the terms $\tau$  associated with the elements $\tau \in S_p^{D*}$, i.e. 
 \begin{eqnarray} 
 \Sigma^{(p)*}&:=& \sum_{\sigma \in S_p^D}  \sum_{\tau \in S^{D*}_p}  C_{[\sigma\tau^{-1}]}  \; \Delta\rho{[\sigma]}\;  \Delta H{[\tau]}
\nonumber  \\ &=& \| \Delta \hat{H}\|_2^{p}  \sum_{\sigma \in  S_p^{D}} \sum_{\tau \in S^{D*}_p} C_{[\sigma\tau^{-1}]}  \; 
  \Delta \rho{[\sigma]}\;, \label{sigmatau} 
\end{eqnarray} 
which we simplified invoking  Eq.~(\ref{df12}), and  where the second contribution includes instead  all the remaining derangements $\tau$, i.e. 
\begin{equation}  \label{sigmatau3}
  \Delta \Sigma^{(p)}:=\sum_{\sigma \in S_p^{D}} 
  \sum_{\tau \in S^{D}_p / S^{D*}_p}C_{[\sigma\tau^{-1}]}  \;   \; \Delta \rho{[\sigma]}\; \Delta H{[\tau]} \;.
 \end{equation}

 Observe next  that since all the permutations $\tau$ entering in the sum 
 (\ref{sigmatau}) belong to the same conjugacy class $S_p^{D*}$, we have 
 \begin{eqnarray} 
 C_{[\sigma\tau^{-1}]}  \Delta \rho{[\sigma]} &=& C_{[\sigma (\sigma^{-1}_1 \tau_\star^{-1} \sigma_1)]}  \Delta \rho{[\sigma]} 
\nonumber \\
&=& C_{[(\sigma_1^{-1} \sigma_1) \sigma 
\sigma^{-1}_1 \tau_\star^{-1} \sigma_1]}\Delta  \rho{[\sigma]} 
\nonumber \\
&=& C_{[ \sigma_1 \sigma 
\sigma^{-1}_1 \tau_\star^{-1}]}\Delta  \rho{[\sigma]} 
 = C_{[ \sigma'  \tau_\star^{-1}]} \Delta \rho{[\sigma^{-1}_1\sigma'\sigma_1]} 
\nonumber \\
&=&C_{[ \sigma'  \tau_\star^{-1}]} \Delta \rho{[\sigma']}\;, 
 \end{eqnarray} 
 where in the first identity we invoked (\ref{conjugation})  to write $\tau=\sigma_1 \tau_{\star} \sigma_1^{-1}$  where 
 $\tau_{\star}$ is a fixed element of $S_p^{D*}$ and $\sigma_1\in S_p$, where in the forth line we introduced
 the permutation 
 $\sigma'= \sigma_1 \sigma\sigma_1^{-1}$ which inherits from $\sigma$ the property of being a derangement, and where finally 
 in the third and fifth identity we use the fact $C_{[\sigma]}$ and $\Delta\rho[\sigma]$ are functional of the conjugacy classes. From the above identity we can conclude that $\sum_{\sigma \in S_p^{D}} C_{[\sigma\tau^{-1}]}  \rho{[\sigma]}$
 is independent from the specific choice of $\tau \in S_p^{D*}$, i.e. 
 \begin{equation} 
 \sum_{\sigma \in S_p^{D}} C_{[\sigma\tau^{-1}]} \Delta \rho{[\sigma]} = \sum_{\sigma \in S_p^{D}} C_{[\sigma\tau^{-1}_{\star}]}  \Delta \rho{[\sigma]} \;, \quad \forall \tau \in S_p^{D*}\;.
 \end{equation} 
 Equation~(\ref{sigmatau}) can then be simplified as follows 
  \begin{eqnarray} \Sigma^{(p)*}&=& (p-1)!!  \| \Delta \hat{H}\|_2^{p}  \sum_{\sigma \in S_p^{D}} C_{[\sigma\tau_{\star}^{-1}]}  \; \Delta\rho{[\sigma]} \;. 
\label{summation1} 
\end{eqnarray}
Now we single out from the above summation the term $\sigma=\tau_{\star}$ from the rest
identifying the contributions 
 \begin{eqnarray} 
\overline{\Sigma}^{(p)*} \label{OVER} 
 &:=& (p-1)!!  \| \Delta \hat{H}\|_2^{p}  \;  C_{[1^p]} \;  \; \Delta \rho{[\tau_{\star}]}\;, \\
 \underline{\Sigma}^{(p)*}
 &:=&  (p-1)!!  \| \Delta \hat{H}\|_2^{p}  \sum_{\sigma \in S_p^{D}/\{ \tau_{\star} \} } C_{[\sigma\tau_{\star}^{-1}]}  \; \Delta\rho{[\sigma]} \;, \nonumber \\  \label{UNDER} 
 \end{eqnarray}
where $[1^p]$ is the conjugacy class of the identical permutation. 
Accordingly Eq.~(\ref{summation1}) becomes 
 \begin{eqnarray}  
 \Sigma^{(p)*}&=& \overline{\Sigma}^{(p)*} + \underline{\Sigma}^{(p)*}\;,
 \label{split_sigmaasterisco}
  \end{eqnarray}
which replaced into Eq.~(\ref{simplypsplit}) results into the following decomposition of the $p$-th centred moment of the distribution $P(E; \hat{\rho}, \hat{H})$, 
 \begin{eqnarray} 
 \Sigma^{(p)} &=& \overline{\Sigma}^{(p)*} +   \underline{\Sigma}^{(p)*}  + \Delta \Sigma^{(p)}\;.
  \label{simplypsplit0} 
 \end{eqnarray}

 \subsubsection{Asymptotic behaviour of  the leading term} 
Here we analyze the asymptotic behaviour of $\overline{\Sigma}^{(p)*}$ of Eq.~(\ref{OVER}) which,
as will shall see is the leading term of the centred moment ${\Sigma}^{(p)}$.
Indeed given that  $\tau_{\star}$ is by definition an element of $S_p^{D*}$ from Eq.~(\ref{df12}) we get 
\begin{eqnarray} 
\Delta \rho[\tau_{\star}] = \| \Delta \hat{\rho} \|_2^p\;, \label{deltastar} 
\end{eqnarray} 
while from Eqs.~(\ref{C_1}) and (\ref{sign_of_C}) we get $C_{[1^p]} = \frac{1}{d^p} + \mathcal{O}\left( \frac{1}{d^{p+2}} \right)$
which together leads to 
\begin{eqnarray} 
\overline{\Sigma}^{(p)*} \label{OVERstima} 
 &\simeq& \frac{(p-1)!!  \| \Delta \hat{H}\|_2^{p}  \;   \| \Delta \hat{\rho} \|_2^p }{d^p} \;, 
 \end{eqnarray}
that in the large $d$ limit scales exactly as ${\mathbf G}_{p}$ of Eq.~(\ref{SCALINGNONPURE}) and hence as $\Sigma_G^{(p)}$. 
More precisely, under the hypothesis~(\ref{CONDDD}) we can use  Eq.~(\ref{bounds_su_C_var}) to write 
 \begin{multline}
\left\lvert C_{[1^{p}]} - \tfrac{1}{(d^2-1)^{p/2}} \right\rvert \leq  
\left( 1 - \tfrac{6p^{7/2}}{d^2} \right)^{-1}  \tfrac{6p^{7/2}}{(d^2 - 1)^{p/2} d^{2}}\;,
\label{bound_secondordineC1}
\end{multline}
and hence 
\begin{eqnarray}  \nonumber 
{\left|  \overline{\Sigma}^{(p)*} - \Sigma_G^{(p)} \right|}&=& (p-1)!!  \| \Delta \hat{H}\|_2^{p}  \;   \| \Delta \hat{\rho} \|_2^p 
  \left| C_{[1^p]}  - \tfrac{1}{(d^2-1)^{p/2}}\right| 
\label{boundSigmaOVER} \nonumber \\
& \leq&  {\mathbf G}_{p} \; \overline{f}^*(d,p) \;,
 \end{eqnarray} 
with 
\begin{eqnarray} \label{defoverf} 
\overline{f}^*(d,p)  := \left( 1 - \tfrac{6p^{7/2}}{d^2} \right)^{-1}  \tfrac{6p^{7/2}}{d^{2}}\;.
\end{eqnarray}

 \subsubsection{Asymptotic behaviour of  $\underline{\Sigma}^{(p)}$ (discarded $\sigma$'s)} 
 \label{secerrorsigma} 

Here we evaluate the asymptotic behaviour of  $\underline{\Sigma}^{(p)*}$ of Eq.~(\ref{UNDER}). 
Let's start observing that 
\begin{eqnarray} 
 |\underline{\Sigma}^{(p)*}| 
 &\leq &  (p-1)!!  \| \Delta \hat{H}\|_2^{p}  \sum_{\sigma \in S_p^{D}/\{ \tau_{\star} \} } \left| C_{[\sigma\tau_{\star}^{-1}]}  \;  \Delta\rho{[\sigma]} \right|  \nonumber \\ 
 &\leq &  (p-1)!!  \| \Delta \hat{H}\|_2^{p}   \| \Delta \hat{\rho}\|_2^{p}  \sum_{\sigma \in S_p^{D}/\{ \tau_{\star} \} } \left| C_{[\sigma\tau_{\star}^{-1}]} \right|  \nonumber \\
 &\leq &  (p-1)!!  \| \Delta \hat{H}\|_2^{p}   \| \Delta \hat{\rho}\|_2^{p}  \sum_{\sigma \in S_p/\{ Id \} } \left| C_{[\sigma]} \right|  \;,
  \label{UNDERineq} 
 \end{eqnarray}
where in  the second inequality we used the results of Sec.~\ref{Sec:useful_ineq} to claim that 
\begin{eqnarray} \label{THEBOUND1rho}
|\Delta\rho{[\sigma]}| \leq   \| \Delta \hat{\rho}\|_2^{p}  \qquad \forall \sigma \in S_p^{D}\;,
\end{eqnarray}
while in the third inequality we extended the summation over $\sigma$ outside the derangement set, while still keeping out the terms which makes 
$\sigma\tau_{\star}^{-1}$ the identity permutation $Id$. 
Next step is to provide an estimation of $\sum_{\sigma \in S_p/\{ Id \} } \left| C_{[\sigma]} \right|$.
For this purpose let us decompose $S_p$ in terms of the subsets $S_p(k)$ introduced in Eq.~(\ref{defspk}) and notice that 
\begin{equation}
\left. \begin{array}{l} S_p = \cup_{k=0}^{p} S_p(k)\\\\
\{ Id\} = S_p(0) 
\end{array} 
 \right\} \Longrightarrow S_p/\{ Id \}  = \cup_{k=1}^{p} S_p(k)\;. 
 \end{equation} 
 From this it hence  follows that 
 \begin{eqnarray} 
 \sum_{\sigma \in S_p/\{ Id \} } \left| C_{[\sigma]} \right| &=& \sum_{k=1}^p  \sum_{\sigma \in S_p(k) } \left| C_{[\sigma]} \right| \label{bound_sigmascartate} \\
&\leq&  \left( 1 - \tfrac{6p^{7/2}}{d^2} \right)^{-1} \sum_{k=1}^p \sum_{\sigma \in S_p(k)} 
\tfrac{\mathcal{C}_{[\sigma]}}{d^{p + k}} 
 \nonumber \\
&\leq& \left( 1 - \tfrac{6p^{7/2}}{d^2} \right)^{-1} \sum_{k=1}^p
\tfrac{Cat_p}{d^{p+k}} \left[ {p \atop p-k} \right]
 \nonumber \\
&=& \left( 1 - \tfrac{6p^{7/2}}{d^2} \right)^{-1} \tfrac{Cat_p}{d^{p}}  \left( \tfrac{(p+d-1)!}{d^p (d-1)!} - 1\right) \nonumber \;,
\label{sommasuiCsigma}
\end{eqnarray}
where in the first inequality we adopted Eq.~(\ref{bounds_su_C}) which holds under the condition (\ref{CONDDD}); in the second
 we used~\cite{STANLEY}  
 \begin{equation}
  \max_{\sigma \in S_p(k)} \label{stanley2} 
 \mathcal{C}_{[\sigma]} =  \max_{\sigma \in S_p(k)}\left[ \prod_{i=1}^{p-k} Cat_{{\alpha^{(i)}_{[\sigma]}}-1} \right]\leq Cat_{p}\;,
 \end{equation} 
 and compute the total number of elements in $S_p(k)$ via the String number of the fist kind $\left[ {p \atop p-k} \right]$~\cite{ConcreteMathematics};
and  finally in the last identity we used~(\ref{somma_stirling}). 
Replacing~(\ref{bound_sigmascartate}) and (\ref{SCALINGNONPURE}) into Eq.~(\ref{UNDERineq}) we hence get 
   \begin{eqnarray} 
 \left| \underline{\Sigma}^{(p)*}\right| 
 &\leq&  {\mathbf G}_{p}   \left( 1 - \tfrac{6p^{7/2}}{d^2} \right)^{-1} \;Cat_p\left( \tfrac{\sqrt{d^2-1}}{d}\right)^p \;
\left( \tfrac{(p+d-1)!}{d^p (d-1)!} - 1 \right)  \nonumber\\
&\leq&  {\mathbf G}_{p}   \left( 1 - \tfrac{6p^{7/2}}{d^2} \right)^{-1} \;Cat_p  \;
\left( \tfrac{(p+d-1)!}{d^p (d-1)!} - 1 \right)\;.
 \label{UNDERnewq} 
 \end{eqnarray}
 Observe that, for $p$ constant, in the large $d$ limit the  function multiplying ${\mathbf G}_{p}$ on the r.h.s. tends to zero as $\mathcal{O}(1/d)$. Indeed for 
 \begin{eqnarray} p^2 \leq d, \label{CONDDD1} \end{eqnarray} 
 noticing  that 
\begin{eqnarray}
\ln \frac{(p+d-1)!}{d^p(d-1)!} &=& \ln \prod_{i=0}^{p-1} \left( 1 + \frac{i}{d} \right) 
= \sum_{i=0}^{p-1} \ln  \left( 1 + \frac{i}{d} \right) \nonumber \\ &\leq&  \sum_{i=0}^{p-1} \frac{i}{d} = \frac{p(p-1)}{2d}  \; ,
\end{eqnarray}
and using the inequality
\begin{eqnarray}
e^x - 1 \leq 2x  \quad (x \leq 1.256) \; ,
\end{eqnarray}
with $x = p^2 / d \geq p(p-1)/d$, 
 the rising factorial can be bounded as
\begin{eqnarray}
\frac{(p+d-1)!}{d^p(d-1)!} - 1 \leq \frac{p^2}{d} \; .
\label{bound_risingfact}
\end{eqnarray}
 Exploiting this relation Eq.~(\ref{UNDERnewq})  can finally be casted in the form
    \begin{eqnarray} 
 \left| \underline{\Sigma}^{(p)*}\right| 
\leq {\mathbf G}_{p}  \;   \underline{f}^*(d,p)  \;, 
 \label{UNDERnewq1c} 
 \end{eqnarray}
 with 
     \begin{equation} \label{def2} 
  \underline{f}^*(d,p)  :=   \left( 1 - \tfrac{6p^{7/2}}{d^2} \right)^{-1} \;  \tfrac{p^2 Cat_p}{d}  \;
\;,
 \end{equation}
 the expression being valid when both the conditions (\ref{CONDDD}) and (\ref{CONDDD1}) apply, or in brief when
 Eq. (\ref{CONDONP}) holds.

 \subsubsection{Asymptotic behaviour of  $\Delta{\Sigma}^{(p)}$ (discarded $\tau$'s)} 
 \label{secerrortau} 

Here we study the term $\Delta{\Sigma}^{(p)}$ which according to Eq.~(\ref{sigmatau3}) is obtained by 
running the sum over $\tau$ by
only including derangements that are not in the special subset $S_p^{D*}$.
 Accordingly invoking  Eq.~(\ref{THEBOUND1H}) and (\ref{THEBOUND1rho}) we can write:
 \begin{eqnarray}
 | \Delta \Sigma^{(p)}|&\leq& \sum_{\sigma \in S_p^{D}} 
  \sum_{\tau \in S^{D}_p / S^{D*}_p} | C_{[\sigma\tau^{-1}]}|  \;   \; |\Delta \rho{[\sigma]}|\; |\Delta H{[\tau]}| \nonumber \\
  &\leq&  \eta_{\Delta \hat{H}} \|\Delta \hat{H}\|_2^{p} \| \Delta \hat{\rho}\|_2^{p}  \sum_{\sigma \in S_p^{D}} 
  \sum_{\tau \in S^{D}_p / S^{D*}_p} | C_{[\sigma\tau^{-1}]}|   \nonumber \\
  &\leq&  \eta_{\Delta \hat{H}} \|\Delta \hat{H}\|_2^{p} \| \Delta \hat{\rho}\|_2^{p}  \sum_{\sigma \in S_p} 
  \sum_{\tau \in S^{D}_p / S^{D*}_p} | C_{[\sigma\tau^{-1}]}|   \nonumber \\
  &=&  \eta_{\Delta \hat{H}} \|\Delta \hat{H}\|_2^{p} \| \Delta \hat{\rho}\|_2^{p}  \sum_{\sigma \in S_p} 
  \sum_{\tau \in S^{D}_p / S^{D*}_p} | C_{[\sigma]}|   \nonumber \\
   &=&  \eta_{\Delta \hat{H}} \|\Delta \hat{H}\|_2^{p}   \| \Delta \hat{\rho}\|_2^{p}  \left|S^{D}_p / S^{D*}_p\right| \sum_{\sigma \in S_p} 
  | C_{[\sigma]}|   \nonumber \\
     &\leq&  \eta_{\Delta \hat{H}} \|\Delta \hat{H}\|_2^{p}   \| \Delta \hat{\rho}\|_2^{p} \left( \left\lceil {p!}/{e}  \right\rceil -  (p-1)!!\right)
     \nonumber \\
      && \qquad \qquad  \times    \left( 1 - \tfrac{6p^{7/2}}{d^2} \right)^{-1} \tfrac{Cat_p}{d^{p}}  \tfrac{(p+d-1)!}{d^p (d-1)!}  \;,  \nonumber \\
        \label{sigmatau3ineq}
 \end{eqnarray} 
 where we used~(\ref{stima}) and~(\ref{number_of_derangements}) to get an estimate of $\left|S^{D}_p / S^{D*}_p\right|$  and follow the same passages of 
 Eq.~(\ref{bound_sigmascartate}) to get
\begin{eqnarray} 
 \sum_{\sigma \in S_p } \left| C_{[\sigma]} \right| \leq \left( 1 - \tfrac{6p^{7/2}}{d^2} \right)^{-1} \tfrac{Cat_p}{d^{p}}  \tfrac{(p+d-1)!}{d^p (d-1)!} \;.
\end{eqnarray}
Using hence  Eq.~(\ref{SCALINGNONPURE}) we can then translate Eq.~(\ref{sigmatau3ineq}) into 
 \begin{eqnarray}
&&  | \Delta \Sigma^{(p)}|        \nonumber
      \leq  \eta_{\Delta \hat{H}} {\mathbf G}_{p} \left( \tfrac{\left\lceil {p!}/{e}  \right\rceil }{(p-1)!!} -  1\right) \left( 1 - \tfrac{6p^{7/2}}{d^2} \right)^{-1} \left( \tfrac{\sqrt{d^2-1}}{d}\right)^p 
\\
      && \qquad \qquad \qquad   \times      \;Cat_p\; \nonumber
 \tfrac{(p+d-1)!}{d^p (d-1)!}  \\
  &&\qquad \leq  \eta_{\Delta \hat{H}} {\mathbf G}_{p} \left( \tfrac{\left\lceil {p!}/{e} \right\rceil }{(p-1)!!} -  1\right)
 \left( 1 - \tfrac{6p^{7/2}}{d^2} \right)^{-1}  \;Cat_p \;
 \tfrac{(p+d-1)!}{d^p (d-1)!}  \;,  \nonumber 
 \\ \label{sigmatau3ineqnew}
 \end{eqnarray} 
where, using Eq.~(\ref{bound_risingfact}) and assuming (\ref{CONDDD1}),  the last coefficient  can be further bounded from above as 
\begin{eqnarray}
\frac{(p+d-1)!}{d^p(d-1)!}  \leq 1+ \frac{p^2}{d} \; ,
\label{bound_risingfact1}
\end{eqnarray}
that shows that  in the large $d$ limit this term converges to $1$.
To summarize we can hence conclude that 
\begin{eqnarray}  \label{BOUNDdf} 
  | \Delta \Sigma^{(p)}| &\leq&  \eta_{\Delta \hat{H}} \;    {\mathbf G}_{p} 
\; \Delta f(d,p) \;,
 \end{eqnarray} 
 with 
 \begin{equation} \label{defDELTAf} 
 \Delta f(d,p) := 
\left( \tfrac{\left\lceil {p!}/{e} \right\rceil }{(p-1)!!} -  1\right)
 \left( 1 - \tfrac{6p^{7/2}}{d^2} \right)^{-1}  \;Cat_p ( 1+ \tfrac{p^2}{d}) 
   \;, 
 \end{equation} 
 which again is valid under the condition~(\ref{CONDONP}).

\subsubsection{Total scaling} 
From Eqs.~(\ref{boundSigmaOVER}), (\ref{UNDERnewq1c}), and (\ref{BOUNDdf}) 
 we can now estimate the distance of $\Sigma^{(p)}$ from $\Sigma_G^{(p)}$: 
indeed from Eq.~(\ref{simplypsplit0}), 
we get that under the assumption~(\ref{CONDONP}) we can write 
 \begin{eqnarray} 
 | \Sigma^{(p)} - \Sigma_G^{(p)} | &\leq &| \overline{\Sigma}^{(p)*} -\Sigma_G^{(p)} |  +   |\underline{\Sigma}^{(p)*}| + |\Delta \Sigma^{(p)}| 
 \nonumber \\
 &\leq&  {\mathbf G}_{p}  \;  f_{\hat{H}}(d,p)\;,
  \label{simplypsplit01} 
 \end{eqnarray}  
 with 
 \begin{equation}
  f_{\hat{H}}(d,p) :=  \overline{f}^*(d,p) +  \underline{f}^*(d,p) + \eta_{\Delta\hat{H}} \;  \Delta f(d,p)\;,
 \end{equation}
  exactly matching the expression given in  Eq.~(\ref{sezII_bbound_erroresusigma1}).

\subsection{Case $p$ odd} 
The proof of Eq.~(\ref{sezII_bbound_erroresusigma1})
in the case of $p$ odd  is easier to treat as now  the set  $S_p^{D*}$ is empty, i.e. a condition which we can 
summarize by saying that $\Sigma^{(p)*}=0$. Accordingly we can 
treat the whole $\Sigma^{(p)}$ as we treated the component $\Delta{\Sigma}^{(p)}$ of the even $p$ case.
In particular the entire derivation of Sec.~\ref{secerrortau} can be repeated with the minor modification that 
according to (\ref{THEBOUND1}) the coefficient $\eta_{\hat{H}}$ entering into (\ref{BOUNDdf}) 
 gets replaced by $\sqrt{\eta_{\hat{H}}}$. Therefore in this case we have 
\begin{eqnarray}   | \Sigma^{(p)}| 
&\leq&
\sqrt{\eta_{\hat{H}}}  \; {\mathbf G}_{p}  
\; \Delta f(d,p) \;, \label{sigmatau3ee}
 \end{eqnarray} 
 with $\Delta f(d,p)$ as in Eq.~(\ref{defDELTAf}), 
which  corresponds to the expression given in Eq.~(\ref{intro_tauscartati_simpler}).

\section{Moment generating function}

\label{sec:moments_mgf}

In this section we shall estimate the distance between  the generating function of the 
 moments of the distribution $P(E; \hat{\rho}, \hat{H})$, i.e. 
\begin{equation}
\mathcal{G}(t) := \left\langle e^{t\left(E - \mu \right)} \right\rangle = \sum_{p=0}^\infty \frac{t^p \Sigma^{(p)}}{p!}\;,
\label{moment_genarating_function}
\end{equation}
and the  one associated with the  Gaussian distribution $P_G^{(\mu,\Sigma^{(2)})}(E)$, i.e. 
\begin{equation}
\sum_{p=0}^\infty \frac{t^p \Sigma_G^{(p)}}{p!} =e^{\frac{1}{2}t^2\Sigma^{(2)}}\;,
\end{equation}
where $\Sigma_G^{(p)}$ are the Gaussian moments  defined in Eq.~(\ref{intro_momenti_gaussiani}).

In particular we shall show that, for values of $\lvert t \rvert$ such that
\begin{eqnarray}
\lvert t \rvert \leq \min\left\{ \frac{\sqrt{N^\star}}{\sqrt{\Sigma^{(2)}}},  \frac{{N^\star}}{ {\Delta E}_{\max}
}\right\} \;,
\label{intervallo_validita_t}
\end{eqnarray}
where
\begin{equation} 
{\Delta E}_{\max} := \max\left\{ \left\rvert\mbox{Tr}[\hat{\rho}^{(\downarrow)}  \hat{H}] - \mu\right\rvert, 
\left\rvert\mbox{Tr}[\hat{\rho}^{(\uparrow)}  \hat{H}]  - \mu\right\rvert\right\} \;.
\label{stima_momenti_1.1} 
\end{equation}
and
\begin{eqnarray}
N^\star := \min \left\{ \left\lfloor \sqrt{d} \right\rfloor  , \left\lfloor (d / 2\sqrt{3})^{4/7} \right\rfloor \right\}\;,
\label{def_Nstar}
\end{eqnarray}
 then the following inequality holds:
\begin{eqnarray}
\left\lvert \mathcal{G}(t) - e^{\frac{1}{2}t^2\Sigma^{(2)}} \right\rvert \leq  
e^{2t^2 \Sigma^{(2)}} - 2 t^2  \Sigma^{(2)} - 1 \nonumber \\
+
\tfrac{32t^2 \Sigma^{(2)}  }{\sqrt{\pi} d} 
\left(e^{16t^2 \Sigma^{(2)}} - 1  \right) \;  \nonumber \\
+ \tfrac{1}{2\sqrt{\pi}} \left[ \tfrac{e^{\left[ - 4\lvert t \rvert \sqrt{\Sigma^{(2)} / \eta_{\hat{H}}} \right]}}{ \left( 1 - 4\lvert t \rvert \sqrt{\eta_{\hat{H}}\Sigma^{(2)}} \right)^{\tfrac{1}{\eta_{\hat{H}}}}} 
-  e^{16t^2 \Sigma^{(2)}} \right]
\nonumber \\ +
\tfrac{1}{\left\lceil N^\star/2 \right\rceil !} \tfrac{t^2 \Sigma^{(2)}}{1 - \left( t^2\Sigma^{(2)} / N^\star \right)}
+ 
\tfrac{1}{N^\star!}  \tfrac{\lvert t \rvert {\Delta E}_{\max}}{1 - \left( \lvert t \rvert {\Delta E}_{\max} / N^\star \right)} \nonumber \\
= \frac{32}{3} \sqrt{\eta_{\hat{H}} / \pi} \cdot \left\lvert t \sqrt{ \Sigma^{(2)}} \right\rvert^3  + \mathcal{O} \left( \eta_{\hat{H}} t^4 \left( \sqrt{\Sigma^{(2)}} \right)^2  \right) \;.\nonumber \\
\label{result_mgf}
\end{eqnarray}
The error~(\ref{result_mgf}) is also valid for the characteristic function of the distribution, namely  $\left\langle e^{-itE} \right\rangle$~\cite{NOTE5}.

\subsection{Initial considerations}
We shall estimate the maximum distance between the moment generating functions with the series
\begin{equation}
\left\lvert \mathcal{G}(t) - e^{\frac{1}{2}t^2\Sigma^{(2)}} \right\rvert \leq 
\sum_{p=0}^\infty \frac{t^p}{p!} \left\lvert \Sigma^{(p)} - \Sigma_G^{(p)} \right\rvert.
\label{sommatoria_mgf}
\end{equation}
All the estimations we made in Sec.~\ref{sec:bound_momenti} are valid only under the assumption
~(\ref{CONDONP}).
At some point, the momenta $\Sigma^{(p)}$ will become much more different from $\Sigma^{(p)}_G$, because the distribution $P(E| \hat{\rho}; \hat{H})$ has a compact support, while the gaussian distribution $P_G^{(\mu,\Sigma^{(2)})}(E)$ has infinite tails which become more and more important for the momenta of higher order.
For this reason, it is convenient to consider separately the terms of the sum on the r.h.s. of Eq.~(\ref{sommatoria_mgf}) with 
$p$ not fulfilling~(\ref{CONDONP}) from the rest 
writing 
\begin{equation}
\left\lvert \mathcal{G}(t) - e^{\frac{1}{2}t^2\Sigma^{(2)}} \right\rvert 
\leq  
\Gamma^{*}_{<} + \Delta\Gamma_{<} + \Gamma_{>}\;, 
\end{equation}
where, given the cutoff~(\ref{def_Nstar}),
we set 
\begin{eqnarray}
\Gamma_{>} &:=& \sum_{p = N^\star}^\infty  \frac{t^p}{p!} \left\lvert \Sigma^{(p)} - \Sigma_G^{(p)} \right\rvert \label{GammaMaggiore} \;,
\end{eqnarray}
and, invoking the decomposition~(\ref{simplypsplit}), 
\begin{eqnarray}
\Gamma^{*}_{<} &:=& \sum_{p=1}^{N^\star}  \frac{t^p}{p!} \left\lvert \Sigma^{(p)*} - \Sigma_G^{(p)} \right\rvert = \sum_{\substack{p=4  \\ p \text{ even}}}^{N^\star}  \frac{t^p}{p!} \left\lvert \Sigma^{(p)*} - \Sigma_G^{(p)}  \right\rvert \;,  \nonumber \\ \label{GammaAstMinore}
 \\
\Delta \Gamma_{<} &:=& \sum_{p=1}^{N^\star}  \frac{t^p}{p!} \left\lvert \Delta \Sigma^{(p)} \right\rvert \label{DeltaGammaMinore} \;, 
\end{eqnarray}
with $\Sigma^{(p)*}$ and $\Delta \Sigma^{(p)})$ defined as in
Eqs.~(\ref{sigmatau}) and~(\ref{sigmatau3}) -- notice that Eq.~(\ref{GammaAstMinore}) was simplified by using the fact that 
for $p$ odd both $\Sigma^{(p)*}$ and $\Sigma_G^{(p)}$ are equal to zero, while by definition  $\Sigma^{(2)*}=\Sigma_G^{(2)}$.

\subsection{Bound on $\Gamma_{>}$}
When $p$ is large enough the condition~(\ref{CONDONP}) brakes and 
we can not use anymore the bounds of Sec.~\ref{sec:bound_momenti}. However since the spectrum of the random
variable $E$ is limited as in Eq.~(\ref{sorting_inequality}), for all $p$ we can write 
\begin{eqnarray}\label{stima_momenti_1}
\Sigma^{(p)} \leq {\Delta E}_{\max}^p\;, \end{eqnarray} 
with ${\Delta E}_{\max}^p$ as defined in~(\ref{stima_momenti_1.1}).

Therefore we can write 
\begin{equation}
 \left\lvert \Sigma^{(p)} - \Sigma^{(p)}_G \right\rvert
 =
  \left\lvert \Sigma^{(p)} \right\rvert \leq  {\Delta E}_{\max}^p \;,
\end{equation}
for $p$ odd, and 
\begin{equation}
 \left\lvert \Sigma^{(p)} - \Sigma^{(p)}_G \right\rvert
\leq  \Sigma^{(p)} + \Sigma^{(p)}_G \leq    {\Delta E}_{\max}^p +  \Sigma^{(p)}_G\;, 
\end{equation}
for $p$ even. 
Exploiting these inequalities, when~(\ref{intervallo_validita_t}) holds we can estimate the quantity~(\ref{GammaMaggiore}) as
\begin{multline}
\Gamma_{>} \leq 
\sum_{\substack{p= N^\star \\ p \text{ even}}}^\infty \frac{t^p (p-1)!! \left( \Sigma^{(2)} \right)^{p/2}}{p!} 
+ \sum_{\substack{p = N^\star}}^\infty \frac{t^p  {\Delta E}_{\max}^p }{p!} \\
\leq
\sum_{n=\left\lceil N^\star/2 \right\rceil}^\infty \frac{1}{n!} \left( \frac{t^2 \Sigma^{(2)}}{2} \right)^n
+
\sum_{\substack{p = N^\star}}^\infty \frac{t^p  {\Delta E}_{\max}^p }{p!} \\
\leq
\tfrac{\left( N^\star \right)^{\left\lceil N^\star/2 \right\rceil}}{\left\lceil N^\star/2 \right\rceil !} 
\sum_{n=\left\lceil N^\star/2 \right\rceil}^\infty \left( \frac{t^2 \Sigma^{(2)}}{N^\star} \right)^n \\
+ \frac{
 \left( N^\star \right)^{N^\star}}{ N^\star !}
 \sum_{\substack{p = N^\star}}^\infty \left(\frac{t {\Delta E}_{\max} }{N^{\star}}\right)^p\\
\leq  
\frac{1}{\left\lceil N^\star/2 \right\rceil !} \frac{t^2 \Sigma^{(2)}}{1 - \left( t^2\Sigma^{(2)} / N^\star \right)} \\
+ 
\frac{1}{N^\star!}  \frac{t {\Delta E}_{\max}}{1 - \left( t {\Delta E}_{\max} / N^\star \right)}
\\
= \mathcal{O} \left( \frac{1}{\left\lceil N^\star/2 \right\rceil !} \right)  = \mathcal{O} \left( \frac{1}{\left\lceil \sqrt{d}/2 \right\rceil !} \right)\;. 
\end{multline}

\subsection{Bound on $\Gamma^{*}_{<}$}
From Eq.~(\ref{split_sigmaasterisco}) we can write 
\begin{eqnarray}
\Gamma^{*}_{<} \leq  \overline{\Gamma}^{*}_{<} + \underline{\Gamma}^{*}_{<}\;, 
\end{eqnarray}
where
\begin{eqnarray}
\overline{\Gamma}^{*}_{<} &:=& \sum_{\substack{p=4  \\ p \text{ even}}}^{N^\star}   \frac{t^p}{p!} \left| \overline{\Sigma}^{(p)*} - \Sigma_G^{(p)} \right| \label{GammaAstMinoreOver} \;, \\
\underline{\Gamma}_{<}^* &:=& \sum_{\substack{p=4  \\ p \text{ even}}}^{N^\star}  \frac{t^p}{p!} | \underline{\Sigma}^{(p)*}| \label{GammaAstMinoreUnder} \;,
\end{eqnarray}
and with $\overline{\Sigma}^{(p)*}$ and $\underline{\Sigma}^{(p)*}$ as defined in~(\ref{OVER}) and~(\ref{UNDER}).

First we provide a bound for the term $\overline{\Gamma}^{*}_{<}$. Retrieving from~(\ref{boundSigmaOVER}) the bound on $\left|  \overline{\Sigma}^{(p)*} - \Sigma_G^{(p)} \right|$, and using the expression~(\ref{varianzaCOMP}) for the variance $\Sigma^{(2)}$, we have
\begin{eqnarray}
 \overline{\Gamma}^{*}_{<}  &\leq& \sum_{\substack{p=4 \\ p \text{ even}}}^{\infty}  \tfrac{ (p-1)!!}{p!} 
 \left( 1 - \tfrac{6p^{7/2}}{d^2} \right)^{-1}  \tfrac{6p^{7/2}t^p}{d^{2}} \left( \Sigma^{(2)} \right)^{p/2} \nonumber\\
&\leq& \sum_{\substack{p=4 \\ p \text{ even}}}^{\infty}  \frac{ (p-1)!!}{p!}  
t^p
\left( \Sigma^{(2)} \right)^{p/2} \leq \sum_{n=2}^{\infty}    \frac{2^n t^{2n}}{n!} 
 \left( \Sigma^{(2)} \right)^{n} \nonumber \\
  &=& \exp\left[ 2 t^2   \Sigma^{(2)} \right] - 2t^2 \Sigma^{(2)} - 1 \nonumber \\
&=& \mathcal{O} \left(  t^4 \left( \Sigma^{(2)} \right)^{2} \right) \; ,
\end{eqnarray}
in the second inequality we used the fact that 
the condition $p \leq N^\star \leq \left( d / 2\sqrt{3} \right)^{4/7}$ implies 
\begin{equation}
\tfrac{6p^{7/2}}{d^2} \left( 1 - \tfrac{6p^{7/2}}{d^2} \right)^{-1} \leq 1 \; .
\label{bound_sulmatsu} 
\end{equation}

Regarding $\underline{\Gamma}^{*}_{<}$, recalling equation~(\ref{UNDERnewq1c}), and expliciting the definitions~(\ref{def2}) and~(\ref{SCALINGNONPURE}) of the quantities $\underline{f}^*(d,p)$ and $ {\mathbf G}_{p}$ we have that
\begin{eqnarray}
\left| \underline{\Sigma}^{(p)*}\right| 
\leq
\tfrac{(p-1)!!}{(d^2-1)^{p/2}}  \| \Delta \hat{H}\|_2^p \;  \| \Delta \hat{\rho}\|_2^p
\left( 1 - \tfrac{6p^{7/2}}{d^2} \right)^{-1} \;  \tfrac{p^2 Cat_p}{d} \; , \nonumber \\
= (p-1)!! \left( 1 - \tfrac{6p^{7/2}}{d^2} \right)^{-1} \left( \Sigma^{(2)} \right)^{p/2} \nonumber \\
\end{eqnarray}
from which, using the bound~(\ref{bound_catalan}) on Catalan numbers we can write 
\begin{eqnarray}
\underline{\Gamma}_{<}^*   &\leq&  \sum_{\substack{p=4  \\ p \text{ even}}}^{N^\star}  \tfrac{(p-1)!!}{ p!} \left( 1 - \tfrac{6p^{7/2}}{d^2} \right)^{-1} \;  \tfrac{(4t)^p\sqrt{p} }{\sqrt{\pi}d} \left( \Sigma^{(2)} \right)^{p/2}
\nonumber \\
&\leq&
  2 \sum_{n=2}^{d} \frac{ \left( 2n-1 \right)!!}{(2n)!}  \left( 16t^2 \Sigma^{(2)} \right)^{n} \frac{\sqrt{2n}}{\sqrt{\pi}d} \nonumber \\
&\leq& 
  2 \sum_{n=2}^{\infty} \frac{ \left( 2n-1 \right)!!}{(2n)!}  \left( 16t^2 \Sigma^{(2)} \right)^{n} \frac{{n}}{\sqrt{\pi}d} \nonumber \\
  &=& \frac{1}{\sqrt{\pi} d} \sum_{n=2}^{\infty}\frac{2n}{n!} \left( 16t^2 \Sigma^{(2)} \right)^{n} \nonumber \\
   &=& \frac{1}{\sqrt{\pi} d} \sum_{n=2}^{\infty} \frac{t}{n!} \frac{\partial}{\partial t} \left( 16t^2 \Sigma^{(2)} \right)^{n} \nonumber \\
   &=&\frac{t}{\sqrt{\pi} d} \frac{\partial}{\partial t} \sum_{n=2}^{\infty} \frac{1}{n!}  \left( 16t^2 \Sigma^{(2)} \right)^{n} \nonumber \\
  &=& \frac{t}{\sqrt{\pi} d} \frac{\partial}{\partial t} \left[ \exp\left( 16t^2 \Sigma^{(2)}  \right) - 16t^2 \Sigma^{(2)} - 1  \right] \nonumber \\
  &=& 
  \frac{32t^2  \Sigma^{(2)}  }{\sqrt{\pi} d} 
  \left[ \exp\left( 16t^2 \Sigma^{(2)}  \right) - 1  \right] \; , \nonumber \\
    &=& \mathcal{O} \left( t^4 \left( \Sigma^{(2)} \right)^2 d^{-1} \right) \; , \nonumber \\
\label{expsum_sigma}
\end{eqnarray}
where in the second inequality we let $n = p/2$ and used the fact that, since $p \leq N^\star \leq \left( d / 2\sqrt{3} \right)^{4/7}$,
\begin{equation}
\left( 1 - \tfrac{6p^{7/2}}{d^2} \right)^{-1} \leq 2 \;,
\label{bound_sulmatsu2} 
\end{equation}
and in the third inequality we used $\sqrt{2n} \leq n$.

\subsection{Bound on $\Delta \Gamma_{<}$}
From the definitions~(\ref{DeltaGammaMinore}) and~(\ref{sigmatau3}), we can write the term $\Delta \Gamma_{<}$ as
\begin{eqnarray}
\Delta\Gamma_{<} = 
\sum_{p=1}^\infty \frac{t^p}{p!} \sum_{\sigma \in S_p^{D}} 
\sum_{\tau \in S^{D}_p / S^{D*}_p}C_{[\sigma\tau^{-1}]}  \;   \; \Delta \rho{[\sigma]}\; \Delta H{[\tau]} \;.\nonumber \\
\end{eqnarray}
First we use~(\ref{sommasuiCsigma}), as well as $\Delta\rho{[\sigma\tau]} \leq \| \Delta \hat\rho \|^p_2 $, to write
\begin{eqnarray}
\left \lvert \Delta\Gamma_{<} \right \rvert \nonumber 
\leq
\sum_{p=3}^\infty \frac{t^p}{p!} \| \Delta \rho \|^p_2 \left( \sum_{\sigma \in S_p^{D}} \left\lvert C_{[\sigma]} \right\rvert \right) 
 \left( \sum_{\tau \in S^{D}_p / S^{D*}_p} \; \Delta H{[\tau]} \right) \nonumber \\
< \sum_{p=3}^\infty \frac{t^p \| \Delta \rho \|^p_2 }{p!} \left( 1 - \tfrac{6p^{7/2}}{d^2} \right)^{-1}
\tfrac{Cat_p(d+p-1)!}{d^{2p}(d-1)!} \sum_{\tau \in S^{D}_p / S^{D*}_p} H{[\tau]} \nonumber \\
=
\sum_{p=3}^\infty \frac{t^p}{p!} \left( 1 - \tfrac{6p^{7/2}}{d^2} \right)^{-1}
\tfrac{Cat_p(d+p-1)!}{d^{2p}(d-1)!} \sum_{k=\lfloor p/2 \rfloor + 1}^{p} \sum_{\tau \in S^{D}_p(k)} H{[\tau]}\;, \nonumber \\
\label{Somma_DeltaGamma1}
\end{eqnarray}
where in thethird line we introduced the sets 
\begin{eqnarray}
S^{D}_p(k) := S^{D}_p \cap S_p(k) \; ,
\end{eqnarray}
 and noticed that $S^{D}_p(k) = \varnothing$ when $k < \lfloor p/2 \rfloor + 1$ and that, when $p$ is even, $S^{D}_p(p/2) = S^{D*}_p$ (recall also that $S^{D*}_p = \varnothing$ in the case of odd $p$).

Replacing the bounds~(\ref{bound_catalan}) and~(\ref{bound_risingfact}) in~(\ref{Somma_DeltaGamma1}), we have
\begin{eqnarray}
\left \lvert \Delta\Gamma_{<} \right \rvert \nonumber
\leq
\sum_{p=3}^\infty \frac{4^p t^p \| \Delta \hat\rho \|^p_2 }{\sqrt{\pi}d^p p^{3/2} p!}  \left( 1 - \tfrac{6p^{7/2}}{d^2} \right)^{-1} \\ \times
\left(1 + \frac{p^2}{d} \right) \sum_{k=\lfloor p/2 \rfloor + 1}^{p} \sum_{\tau \in S^{D}_p(k)} \eta_{\hat{H}}^{k - p/2} \| \Delta \hat{H} \|^p_2 \;.
\label{Somma_DeltaGamma2}
\end{eqnarray}

Now, in order to simplify the calculations, we will use the inequality~(\ref{bound_sulmatsu2}) to infer that, since $4 \leq p \leq \sqrt{d}$,
\begin{equation}
p^{-3/2} \left( 1 - \frac{6p^{7/2}}{d^2} \right)^{-1} \left(1 + \frac{p^2}{d} \right) < 4p^{-3/2} \leq \frac{1}{2}  \; .
\label{pulisci_roba}
\end{equation}
With the help of~(\ref{pulisci_roba}), and noticing from~(\ref{varianzaCOMP}) that $\| \Delta \hat{H} \|_2 \| \Delta \hat{\rho} \|_2 / d < \sqrt{\Sigma^{(2)}}$, we can turn~(\ref{Somma_DeltaGamma2}) into
\begin{eqnarray}
\left \lvert \Delta\Gamma_{<} \right \rvert
\leq
\frac{1}{2\sqrt{\pi}} \sum_{p=3}^\infty \frac{1}{p!}\left( \tfrac{4 t \| \Delta \hat{H} \|_2 \| \Delta \hat{\rho} \|_2 }{d} \right)^p \nonumber \\
\times \sum_{k=\lfloor p/2 \rfloor + 1}^{p} \sum_{\tau \in S^{D}_p(k)} \eta_{\hat{H}}^{k - p/2} \nonumber \\
\leq 
\frac{1}{2\sqrt{\pi}} \sum_{p=3}^\infty \frac{1}{p!}\left( 4 t \sqrt{\Sigma^{(2)}} \right)^{p} \sum_{k=\lfloor p/2 \rfloor + 1}^{p} \sum_{\tau \in S^{D}_p(k)} \eta_{\hat{H}}^{k - p/2} \nonumber \\
= \frac{1}{2\sqrt{\pi}} \sum_{p=3}^\infty \frac{1}{p!}\left( 4t \sqrt{\eta_{\hat{H}} \Sigma^{(2)}} \right)^{p} \sum_{k=\lfloor p/2 \rfloor + 1}^{p} \mathrm{d}_{p, k} \eta_{\hat{H}}^{k-p} \; , \nonumber \\
\label{Somma_DeltaGamma3}
\end{eqnarray}
where we called $\mathrm{d}_{p, n}$ the number of elements in the set $S^{D}_p(k)$.

We do not need the explicit values of $\mathrm{d}_{p, k}$ . They are implicitly given by the coefficients in the expansion of the exponential generating function\cite{Flajolet}

\begin{equation}
\sum_{i=1}^{\infty} \sum_{p=i }^{\infty}  \mathrm{d}_{p, p-i} c^i \frac{x^p}{p!} = \frac{e^{-cx}}{(1-x)^c} \; .
\label{expgensum_derangements}
\end{equation}

In wiev of~(\ref{Somma_DeltaGamma3}), we want to keep in the sum only the derangements with less than $p/2$ cycles ($\tau \not \in [2^{p/2}]$), whose generating functions is
\begin{multline}
\sum_{p=2}^{\infty} \sum_{k=\lfloor p/2 \rfloor - 1}^{p}  \mathrm{d}_{p, k} c^{p-k} \frac{x^p}{p!}
=
 \sum_{p=2}^{\infty} \sum_{i=1}^{\lceil p/2 \rceil - 1}   \mathrm{d}_{p, p-i} c^i \tfrac{x^p}{p!} 
 \\
= \sum_{p=0}^{\infty} \sum_{i=1}^{\lfloor p/2 \rfloor}   \mathrm{d}_{p, p-i} c^i \frac{x^p}{p!}  - \sum_{\substack{p\text{ even}}}^{\infty} \tfrac{(p-1)!!}{p!} c^{p/2} x^{p} \\
=
\frac{e^{-cx}}{(1-x)^c} - e^{\frac{cx^2}{2}}
= 
\frac{cx^3}{3} + \mathcal{O} \left( cx^4 \right) \; .
\label{expgensum_derangementstolti}
\end{multline}

Using~(\ref{expgensum_derangementstolti}) into~(\ref{Somma_DeltaGamma3}), with the identifications $x=4t\sqrt{\eta_{\hat{H}}\Sigma^{(2)}}$ and $c = \eta^{-1}_{\hat{H}}$, we can conclude that
\begin{eqnarray}
\left \lvert \Delta\Gamma_{<} \right \rvert \leq
\frac{1}{2\sqrt{\pi}} \left[ \frac{e^{\left[ - 4\lvert t \rvert \sqrt{\Sigma^{(2)} / \eta_{\hat{H}}} \right]}}{ \left( 1 - 4\lvert t \rvert \sqrt{\eta_{\hat{H}}\Sigma^{(2)}} \right)^{\tfrac{1}{\eta_{\hat{H}}}}} 
-  e^{16t^2 \Sigma^{(2)}} \right]
\nonumber \\ =
\frac{32}{3} \sqrt{\eta_{\hat{H}} / \pi} \cdot \left\lvert t \sqrt{ \Sigma^{(2)}} \right\rvert^3  + \mathcal{O} \left( \eta_{\hat{H}} t^4 \left( \sqrt{\Sigma^{(2)}} \right)^2  \right)\;.
\nonumber \\
\label{Somma_DeltaGamma4}
\end{eqnarray}

\section{Conclusions} \label{Sec:conc} 

We derived a bound for the distance between the energy distribution in the unitary orbit of a quantum state and the normal distribution with the same variance, showing that for large dimensions of the Hilbert space the difference in the charachteristic function is suppressed as $d^{-3}$.
We have also characterised the individual moments of the distribution: these result apply also to unitary $t$-design, which mimic the uniform distribution of unitary transformations up to the $t$-th moments.  
Our findings, therefore, are suitable to be be employed for certifying that an unitary $t$-design behaves as expected.
Since the results do not depend on the specific form of the Hamiltonian (but only on a very general hypotesis on its eigenvalues), they can be applied to every observable that can be measured on the system.

	The validity of our results is limited to the specific (but relevant in the field of quantum computing) case in which the evolution of the system is described by a random unitary matrices drawn from the Haar-uniform ensemble (also known as \emph{Circular Unitary Ensemble} \cite{Dyson1962, Eynard2015}).
	In our opinion, it would be interesting to perform a similar analysis on the work distribution in the case in which the time evolution of the system is given by an Hamiltonian drawn from the Wigner-Dyson Gaussian Unitary Ensemble \cite{Wigner1955, Fyodorov2004}, which describes quantum chaotic systems lacking time-reversal symmetry \cite{Gharibyan2018, mehta2004random}, or from the Gaussian Orthogonal or the Gaussian Symplectic Ensembles, which model the evolution quantum chaotic systems in presence time-reversal symmetry \cite{mehta2004random}. The combinatorical tools necessary for this analogous task have been in part already developed \cite{Matsumoto2013}, but significant more work may be required.

Another possible future developement of this work is generalizing it to Hilbert space of infinite dimension. An immediate difficulty is that the Haar measure on the unitary group of infinite dimension is not well-defined \cite{Oxtoby1946}. However, a class of states of particular interest are the gaussian states on systems charachterised by a $N$-modes quadratic bosonic Hamiltonian\cite{WANG2007}: in this setting, it could be of some interest to study the distribution of their energy over the group of symplectic transformations \cite{serafini2017quantum}.

The Authors acknowledge 
 support from 
PRIN 2017 (Progetto di Ricerca di Interesse Nazionale): project 
“Taming complexity with quantum strategies” QUSHIP (2017SRNBRK).


\begin{thebibliography}{70}
	\providecommand{\natexlab}[1]{#1}
	\providecommand{\url}[1]{\texttt{#1}}
	\expandafter\ifx\csname urlstyle\endcsname\relax
	\providecommand{\doi}[1]{doi: #1}\else
	\providecommand{\doi}{doi: \begingroup \urlstyle{rm}\Url}\fi
	
	\bibitem[Hayashi et~al.(2005)Hayashi, Hashimoto, and Horibe]{Hayashi2005}
	A.~Hayashi, T.~Hashimoto, and M.~Horibe.
	\newblock Reexamination of optimal quantum state estimation of pure states.
	\newblock \emph{Physical Review A}, 72\penalty0 (3), September 2005.
	\newblock \doi{10.1103/physreva.72.032325}.
	\newblock URL \url{https://doi.org/10.1103/physreva.72.032325}.
	
	\bibitem[Scott(2008)]{Scott2008}
	A~J Scott.
	\newblock Optimizing quantum process tomography with unitary2-designs.
	\newblock \emph{Journal of Physics A: Mathematical and Theoretical},
	41\penalty0 (5):\penalty0 055308, January 2008.
	\newblock \doi{10.1088/1751-8113/41/5/055308}.
	\newblock URL \url{https://doi.org/10.1088/1751-8113/41/5/055308}.
	
	\bibitem[Brand\~{a}o and Horodecki(2013)]{BrandaoHorodecki2008}
	Fernando G. S.~L. Brand\~{a}o and Michal Horodecki.
	\newblock Exponential quantum speed-ups are generic.
	\newblock \emph{Quantum Info. Comput.}, 13\penalty0 (11â€“12):\penalty0
	901â€“924, November 2013.
	\newblock ISSN 1533-7146.
	
	\bibitem[Emerson et~al.(2005)Emerson, Alicki, and
	{\.{Z}}yczkowski]{Emerson2005}
	Joseph Emerson, Robert Alicki, and Karol {\.{Z}}yczkowski.
	\newblock Scalable noise estimation with random unitary operators.
	\newblock \emph{Journal of Optics B: Quantum and Semiclassical Optics},
	7\penalty0 (10):\penalty0 S347--S352, September 2005.
	\newblock \doi{10.1088/1464-4266/7/10/021}.
	\newblock URL \url{https://doi.org/10.1088/1464-4266/7/10/021}.
	
	\bibitem[McKay et~al.(2019)McKay, Sheldon, Smolin, Chow, and
	Gambetta]{McKay2019}
	David~C. McKay, Sarah Sheldon, John~A. Smolin, Jerry~M. Chow, and Jay~M.
	Gambetta.
	\newblock Three-qubit randomized benchmarking.
	\newblock \emph{Physical Review Letters}, 122\penalty0 (20), May 2019.
	\newblock \doi{10.1103/physrevlett.122.200502}.
	\newblock URL \url{https://doi.org/10.1103/physrevlett.122.200502}.
	
	\bibitem[Helsen et~al.(2019)Helsen, Xue, Vandersypen, and Wehner]{Helsen2019}
	Jonas Helsen, Xiao Xue, Lieven M.~K. Vandersypen, and Stephanie Wehner.
	\newblock A new class of efficient randomized benchmarking protocols.
	\newblock \emph{npj Quantum Information}, 5\penalty0 (1), August 2019.
	\newblock \doi{10.1038/s41534-019-0182-7}.
	\newblock URL \url{https://doi.org/10.1038/s41534-019-0182-7}.
	
	\bibitem[DiVincenzo et~al.(2002)DiVincenzo, Leung, and Terhal]{DiVincenzo2002}
	D.P. DiVincenzo, D.W. Leung, and B.M. Terhal.
	\newblock Quantum data hiding.
	\newblock \emph{{IEEE} Transactions on Information Theory}, 48\penalty0
	(3):\penalty0 580--598, March 2002.
	\newblock \doi{10.1109/18.985948}.
	\newblock URL \url{https://doi.org/10.1109/18.985948}.
	
	\bibitem[Nahum et~al.(2017)Nahum, Ruhman, Vijay, and Haah]{Nahum2017}
	Adam Nahum, Jonathan Ruhman, Sagar Vijay, and Jeongwan Haah.
	\newblock Quantum entanglement growth under random unitary dynamics.
	\newblock \emph{Phys. Rev. X}, 7:\penalty0 031016, Jul 2017.
	\newblock \doi{10.1103/PhysRevX.7.031016}.
	\newblock URL \url{https://link.aps.org/doi/10.1103/PhysRevX.7.031016}.
	
	\bibitem[Jonay et~al.(2018)Jonay, Huse, and Nahum]{Jonay2018}
	Cheryne Jonay, David~A. Huse, and Adam Nahum.
	\newblock Coarse-grained dynamics of operator and state entanglement.
	\newblock 2018.
	
	\bibitem[Gottesman(1998)]{Gottesman1998}
	Daniel Gottesman.
	\newblock Theory of fault-tolerant quantum computation.
	\newblock \emph{Physical Review A}, 57\penalty0 (1):\penalty0 127--137, January
	1998.
	\newblock \doi{10.1103/physreva.57.127}.
	\newblock URL \url{https://doi.org/10.1103/physreva.57.127}.
	
	\bibitem[Veitch et~al.(2014)Veitch, Mousavian, Gottesman, and
	Emerson]{Veitch2014}
	Victor Veitch, S~A~Hamed Mousavian, Daniel Gottesman, and Joseph Emerson.
	\newblock The resource theory of stabilizer quantum computation.
	\newblock \emph{New Journal of Physics}, 16\penalty0 (1):\penalty0 013009,
	January 2014.
	\newblock \doi{10.1088/1367-2630/16/1/013009}.
	\newblock URL \url{https://doi.org/10.1088/1367-2630/16/1/013009}.
	
	\bibitem[Gross et~al.(2007)Gross, Audenaert, and Eisert]{Gross2007}
	D.~Gross, K.~Audenaert, and J.~Eisert.
	\newblock Evenly distributed unitaries: On the structure of unitary designs.
	\newblock \emph{Journal of Mathematical Physics}, 48\penalty0 (5):\penalty0
	052104, May 2007.
	\newblock \doi{10.1063/1.2716992}.
	\newblock URL \url{https://doi.org/10.1063/1.2716992}.
	
	\bibitem[Sawicki and Karnas(2017)]{Sawicki2017}
	Adam Sawicki and Katarzyna Karnas.
	\newblock Universality of single-qudit gates.
	\newblock \emph{Annales Henri Poincar{\'{e}}}, 18\penalty0 (11):\penalty0
	3515--3552, August 2017.
	\newblock \doi{10.1007/s00023-017-0604-z}.
	\newblock URL \url{https://doi.org/10.1007/s00023-017-0604-z}.
	
	\bibitem[Bannai et~al.(2018)Bannai, Navarro, Rizo, and Tiep]{Bannai2018}
	Eiichi Bannai, Gabriel Navarro, Noelia Rizo, and Pham~Huu Tiep.
	\newblock Unitary t-groups.
	\newblock \emph{arXiv [math.RT]}, Oct 2018.
	\newblock URL \url{https://arxiv.org/abs/1810.02507}.
	
	\bibitem[Haferkamp et~al.(2020)Haferkamp, Montealegre-Mora, Heinrich, Eisert,
	Gross, and Roth]{Haferkamp2020}
	Jonas Haferkamp, Felipe Montealegre-Mora, Markus Heinrich, Jens Eisert, David
	Gross, and Ingo Roth.
	\newblock Quantum homeopathy works: Efficient unitary designs with a
	system-size independent number of non-clifford gates.
	\newblock \emph{arXiv [quant-ph]}, Feb 2020.
	\newblock URL \url{https://arxiv.org/abs/2002.09524}.
	
	\bibitem[Esposito et~al.(2009)Esposito, Harbola, and Mukamel]{Esposito2009}
	Massimiliano Esposito, Upendra Harbola, and Shaul Mukamel.
	\newblock Nonequilibrium fluctuations, fluctuation theorems, and counting
	statistics in quantum systems.
	\newblock \emph{Reviews of Modern Physics}, 81\penalty0 (4):\penalty0
	1665--1702, December 2009.
	\newblock \doi{10.1103/revmodphys.81.1665}.
	\newblock URL \url{https://doi.org/10.1103/revmodphys.81.1665}.
	
	\bibitem[M.~Campisi(2011)]{CampisiHanggiTakner}
	P.~Talkner M.~Campisi, P.~H\"anggi.
	\newblock Colloquium. quantum fluctuation relations: Foundations and
	applications.
	\newblock \emph{Rev. Mod. Phys.}, 83:\penalty0 771--791, 2011.
	\newblock \doi{10.1103/RevModPhys.83.771}.
	
	\bibitem[Speck and Seifert(2004)]{Speck2004}
	Thomas Speck and Udo Seifert.
	\newblock Distribution of work in isothermal nonequilibrium processes.
	\newblock \emph{Physical Review E}, 70\penalty0 (6), December 2004.
	\newblock \doi{10.1103/physreve.70.066112}.
	\newblock URL \url{https://doi.org/10.1103/physreve.70.066112}.
	
	\bibitem[{\v{S}}ubrt and Chvosta(2007)]{ubrt2007}
	Ev{\v{z}}en {\v{S}}ubrt and Petr Chvosta.
	\newblock Exact analysis of work fluctuations in two-level systems.
	\newblock \emph{Journal of Statistical Mechanics: Theory and Experiment},
	2007\penalty0 (09):\penalty0 P09019--P09019, September 2007.
	\newblock \doi{10.1088/1742-5468/2007/09/p09019}.
	\newblock URL \url{https://doi.org/10.1088/1742-5468/2007/09/p09019}.
	
	\bibitem[Speck(2011)]{Speck2011}
	Thomas Speck.
	\newblock Work distribution for the driven harmonic oscillator with
	time-dependent strength: exact solution and slow driving.
	\newblock \emph{Journal of Physics A: Mathematical and Theoretical},
	44\penalty0 (30):\penalty0 305001, June 2011.
	\newblock \doi{10.1088/1751-8113/44/30/305001}.
	\newblock URL \url{https://doi.org/10.1088/1751-8113/44/30/305001}.
	
	\bibitem[Miller et~al.(2019)Miller, Scandi, Anders, and
	Perarnau-Llobet]{Miller2019}
	Harry~J.{\hspace{0.167em}}D. Miller, Matteo Scandi, Janet Anders, and
	Mart{\'{\i}} Perarnau-Llobet.
	\newblock Work fluctuations in slow processes: Quantum signatures and optimal
	control.
	\newblock \emph{Physical Review Letters}, 123\penalty0 (23), December 2019.
	\newblock \doi{10.1103/physrevlett.123.230603}.
	\newblock URL \url{https://doi.org/10.1103/physrevlett.123.230603}.
	
	\bibitem[Scandi et~al.(2020)Scandi, Miller, Anders, and
	Perarnau-Llobet]{Scandi2020}
	Matteo Scandi, Harry J.~D. Miller, Janet Anders, and Mart\'{\i}
	Perarnau-Llobet.
	\newblock Quantum work statistics close to equilibrium.
	\newblock \emph{Phys. Rev. Research}, 2:\penalty0 023377, Jun 2020.
	\newblock \doi{10.1103/PhysRevResearch.2.023377}.
	\newblock URL \url{https://link.aps.org/doi/10.1103/PhysRevResearch.2.023377}.
	
	\bibitem[Jarzynski(1997)]{jarzynski1997}
	C.~Jarzynski.
	\newblock Nonequilibrium equality for free energy differences.
	\newblock \emph{Physical Review Letters}, 78\penalty0 (14):\penalty0
	2690--2693, April 1997.
	\newblock \doi{10.1103/physrevlett.78.2690}.
	\newblock URL \url{https://doi.org/10.1103/physrevlett.78.2690}.
	
	\bibitem[Pusz and Woronowicz(1978)]{Pusz1978}
	W.~Pusz and S.~L. Woronowicz.
	\newblock Passive states and {KMS} states for general quantum systems.
	\newblock \emph{Communications in Mathematical Physics}, 58\penalty0
	(3):\penalty0 273--290, October 1978.
	\newblock \doi{10.1007/bf01614224}.
	\newblock URL \url{https://doi.org/10.1007/bf01614224}.
	
	\bibitem[Lenard(1978)]{Lenard1978}
	A.~Lenard.
	\newblock Thermodynamical proof of the gibbs formula for elementary quantum
	systems.
	\newblock \emph{Journal of Statistical Physics}, 19\penalty0 (6):\penalty0
	575--586, December 1978.
	\newblock \doi{10.1007/bf01011769}.
	\newblock URL \url{https://doi.org/10.1007/bf01011769}.
	
	\bibitem[Allahverdyan et~al.(2004)Allahverdyan, Balian, and
	Nieuwenhuizen]{Allahverdyan2004}
	A.~E. Allahverdyan, R~Balian, and Th.~M. Nieuwenhuizen.
	\newblock Maximal work extraction from finite quantum systems.
	\newblock \emph{Europhysics Letters ({EPL})}, 67\penalty0 (4):\penalty0
	565--571, aug 2004.
	\newblock \doi{10.1209/epl/i2004-10101-2}.
	\newblock URL \url{https://doi.org/10.1209%2Fepl%2Fi2004-10101-2}.
		
		\bibitem[Campaioli et~al.(2018)Campaioli, Pollock, and
		Vinjanampathy]{Campaioli2018}
		Francesco Campaioli, Felix~A. Pollock, and Sai Vinjanampathy.
		\newblock Quantum batteries.
		\newblock In \emph{Fundamental Theories of Physics}, pages 207--225. Springer
		International Publishing, 2018.
		\newblock \doi{10.1007/978-3-319-99046-0_8}.
		\newblock URL \url{https://doi.org/10.1007/978-3-319-99046-0_8}.
		
		\bibitem[Alicki and Fannes(2013)]{AlickiFannes2013}
		R.~Alicki and M.~Fannes.
		\newblock Entanglement boost for extractable work from ensembles of quantum
		batteries.
		\newblock \emph{Phys. Rev. E}, 87:\penalty0 042123, 2013.
		
		\bibitem[Hovhannisyan et~al.(2013)Hovhannisyan, Perarnau-Llobet, Huber, and
		Ac\'in]{Hovhannisyan2013}
		Karen~V. Hovhannisyan, Mart\'i Perarnau-Llobet, Marcus Huber, and Antonio
		Ac\'in.
		\newblock Entanglement generation is not necessary for optimal work extraction.
		\newblock \emph{Physical Review Letters}, 111\penalty0 (24), dec 2013.
		\newblock \doi{10.1103/physrevlett.111.240401}.
		\newblock URL \url{https://doi.org/10.1103/physrevlett.111.240401}.
		
		\bibitem[Binder et~al.(2015)Binder, Vinjanampathy, Modi, and Goold]{Binder2015}
		Felix~C Binder, Sai Vinjanampathy, Kavan Modi, and John Goold.
		\newblock Quantacell: powerful charging of quantum batteries.
		\newblock \emph{New Journal of Physics}, 17\penalty0 (7):\penalty0 075015, July
		2015.
		\newblock \doi{10.1088/1367-2630/17/7/075015}.
		\newblock URL \url{https://doi.org/10.1088/1367-2630/17/7/075015}.
		
		\bibitem[Juli{\`{a}}-Farr{\'{e}} et~al.(2020)Juli{\`{a}}-Farr{\'{e}}, Salamon,
		Riera, Bera, and Lewenstein]{JuliFarr2020}
		Sergi Juli{\`{a}}-Farr{\'{e}}, Tymoteusz Salamon, Arnau Riera, Manabendra~N.
		Bera, and Maciej Lewenstein.
		\newblock Bounds on the capacity and power of quantum batteries.
		\newblock \emph{Physical Review Research}, 2\penalty0 (2), May 2020.
		\newblock \doi{10.1103/physrevresearch.2.023113}.
		\newblock URL \url{https://doi.org/10.1103/physrevresearch.2.023113}.
		
		\bibitem[NOT({\natexlab{a}})]{NOTA00}
		{\natexlab{a}}.
		\newblock Given $f(E)$ a generic function of the random variable
		(\ref{RANDOME}), its mean value with respect to $P(E| \hat{\rho}; \hat{H})$,
		i.e. the quantity \begin{eqnarray} \langle f(E) \rangle &: =& \int dE P(E|
		\hat{\rho}; \hat{H}) f(E) \nonumber \\ &=& \int d\mu(\hat{U}) f(E(\hat{U}
		\hat{\rho} \hat{U}^\dagger; \hat{H})), \label{integral} \end{eqnarray} with
		$d\mu(\hat{U})$ representing the Harr measure on $\mathbf{U}(d)$.
		
		\bibitem[NOT({\natexlab{b}})]{NOTA}
		{\natexlab{b}}.
		\newblock Notice that the quantity~(\ref{integrale_momenti}) can be
		equivalently be expressed as \begin{equation*} \langle E^p\rangle =
		\Tr[\mathcal{T}^{(p)}(\hat{\rho}^{\otimes p}) \hat{H}^{\otimes p} ]
		\end{equation*} where \[ \mathcal{T}^{(p)}(\cdots ) := \int d\mu(\hat{U}) \;
		\hat{U}^{\otimes p} \cdots \hat{U}^{\dagger\otimes p} \] are Completely
		Positive, trace preserving map known as \emph{Twirling channels} which have a
		number of applications in quantum information theory~\cite{Bennet1996,
			Bennet1999, Vollbrecht2001, Lee2003, Hayashi2008, Ouyang2011}.
		
		\bibitem[Weingarten(1978)]{Weingarten1978}
		Don Weingarten.
		\newblock Asymptotic behavior of group integrals in the limit of infinite rank.
		\newblock \emph{Journal of Mathematical Physics}, 19\penalty0 (5):\penalty0
		999--1001, May 1978.
		\newblock \doi{10.1063/1.523807}.
		\newblock URL \url{https://doi.org/10.1063/1.523807}.
		
		\bibitem[Creutz(1978)]{Creutz1978}
		Michael Creutz.
		\newblock On invariant integration over {SU}(n).
		\newblock \emph{Journal of Mathematical Physics}, 19\penalty0 (10):\penalty0
		2043, 1978.
		\newblock \doi{10.1063/1.523581}.
		\newblock URL \url{https://doi.org/10.1063/1.523581}.
		
		\bibitem[Bars(1980)]{Bars1980}
		I.~Bars.
		\newblock U(n) integral for the generating functional in lattice gauge theory.
		\newblock \emph{Journal of Mathematical Physics}, 21\penalty0 (11):\penalty0
		2678--2681, November 1980.
		\newblock \doi{10.1063/1.524368}.
		\newblock URL \url{https://doi.org/10.1063/1.524368}.
		
		\bibitem[Collins(2003)]{Collins2003}
		Beno\^it Collins.
		\newblock Moments and cumulants of polynomial random variables on unitary
		groups, the itzykson-zuber integral and free probability.
		\newblock \emph{International Mathematics Research Notices}, 2003\penalty0
		(17):\penalty0 953, 2003.
		\newblock \doi{10.1155/s107379280320917x}.
		\newblock URL \url{https://doi.org/10.1155/s107379280320917x}.
		
		\bibitem[Young(1900)]{Young1900}
		A.~Young.
		\newblock On quantitative substitutional analysis.
		\newblock \emph{Proceedings of the London Mathematical Society}, s1-33\penalty0
		(1):\penalty0 97--145, November 1900.
		\newblock \doi{10.1112/plms/s1-33.1.97}.
		\newblock URL \url{https://doi.org/10.1112/plms/s1-33.1.97}.
		
		\bibitem[Specht(1940)]{Specht1940}
		Wilhelm Specht.
		\newblock Zur theorie der matrizen. ii.
		\newblock \emph{Jahresbericht der Deutschen Mathematiker-Vereinigung},
		50:\penalty0 19--23, 1940.
		\newblock URL \url{http://eudml.org/doc/146243}.
		
		\bibitem[Zuber(2016)]{Zuber2016}
		Jean-Bernard Zuber.
		\newblock Revisiting {SU}(n) integrals.
		\newblock \emph{Journal of Physics A: Mathematical and Theoretical},
		50\penalty0 (1):\penalty0 015203, November 2016.
		\newblock \doi{10.1088/1751-8113/50/1/015203}.
		\newblock URL \url{https://doi.org/10.1088/1751-8113/50/1/015203}.
		
		\bibitem[Fulton and Harris(2004)]{FultonHarris2004}
		William Fulton and Joe Harris.
		\newblock \emph{Representation Theory}.
		\newblock Springer New York, 2004.
		\newblock \doi{10.1007/978-1-4612-0979-9}.
		\newblock URL \url{https://doi.org/10.1007/978-1-4612-0979-9}.
		
		\bibitem[Noaves(2014)]{Noaves2014}
		Marcel Noaves.
		\newblock Elementary derivation of weingarten functions of classical lie
		groups.
		\newblock \emph{arXiv [math-ph]}, Jan 2014.
		\newblock URL \url{http://arxiv.org/abs/1406.2182}.
		
		\bibitem[NOT({\natexlab{c}})]{NOTA1}
		{\natexlab{c}}.
		\newblock Notice that the requirement $c[\sigma]\leq d$ is trivial as long as
		$p\leq d$; if we extend the sum in~(\ref{Weingarten}) to all the irreducible
		representations of $S_p$, we obtain the Weingarten function for the special
		unitary group $S\mathbf{U}(d)$. Therefore, when $c[\sigma]<d$; is equivalent
		to do the integral~(\ref{integrale_U}) on the unitary group $\mathbf{U}(d)$
		or on the special unitary group $S\mathbf{U}(d)$.
		
		\bibitem[Collins and Matsumoto(2017)]{Collins2017}
		Beno\^it Collins and Sho Matsumoto.
		\newblock Weingarten calculus via orthogonality relations: new applications.
		\newblock \emph{Latin American Journal of Probability and Mathematical
			Statistics}, 14\penalty0 (1):\penalty0 631, 2017.
		\newblock \doi{10.30757/alea.v14-31}.
		\newblock URL \url{https://doi.org/10.30757/alea.v14-31}.
		
		\bibitem[Kittaneh(1985)]{Kittaneh1985}
		Fuad Kittaneh.
		\newblock Inequalities for the schatten p-norm.
		\newblock \emph{Glasgow Mathematical Journal}, 26\penalty0 (2):\penalty0
		141--143, July 1985.
		\newblock \doi{10.1017/s0017089500005905}.
		\newblock URL \url{https://doi.org/10.1017/s0017089500005905}.
		
		\bibitem[NOT({\natexlab{d}})]{NOTA2}
		{\natexlab{d}}.
		\newblock The fact that $\eta_{\hat{H}}\geq \frac{1}{d^6}$ follows from the
		inequality \begin{equation*} \| \hat{\Theta}\|_{\infty} \leq \|
		\hat{\Theta}\|_3 \leq \| \hat{\Theta}\|_2\leq \sqrt{d} \; \|
		\hat{\Theta}\|_{\infty} \;, \end{equation*} that holds true for all operators
		$\hat{\Theta}$.
		
		\bibitem[Knuth et~al.(1994)Knuth, Graham, and Patashnik]{ConcreteMathematics}
		Donald~E. Knuth, Ronald~L. Graham, and Oren Patashnik.
		\newblock \emph{Concrete Mathematics}.
		\newblock Addison-Wesley Longman Publishing Co., Inc., USA, 2nd edition, 1994.
		\newblock ISBN 0201558025.
		
		\bibitem[Whitworth(1879)]{Whitwort1878}
		W.~Allen Whitworth.
		\newblock \emph{The Messenger of Mathematics, Vol. VIII}.
		\newblock Forgotten Books, 1879.
		\newblock ISBN 9780282986629.
		
		\bibitem[Sprugnoli(1990)]{Sprugnoli1990}
		Renzo Sprugnoli.
		\newblock On the approximation of catalan numbers and other quantities.
		\newblock \emph{European Journal of Combinatorics}, 11\penalty0 (1):\penalty0
		65--72, January 1990.
		\newblock \doi{10.1016/s0195-6698(13)80057-1}.
		\newblock URL \url{https://doi.org/10.1016/s0195-6698(13)80057-1}.
		
		\bibitem[Stanley(1986)]{STANLEY}
		Richard Stanley.
		\newblock \emph{Enumerative combinatorics}.
		\newblock Wadsworth \& Brooks/Cole Advanced Books \& Software Cambridge
		University Press, Monterey, Calif. Cambridge, U.K, 1986.
		\newblock ISBN 9780521789875.
		
		\bibitem[NOT({\natexlab{e}})]{NOTE5}
		{\natexlab{e}}.
		\newblock We observe incidentally that an alternative way to estimate
		$\left\langle e^{-itE} \right\rangle$ is via the direct evaluation of the
		associated integral, i.e. $\int d\mu(\hat{U}) e^{-itE} = \int d\mu(\hat{U})
		e^{-it\Tr[\hat{U} \hat{\rho} \hat{U}^\dagger \hat{H}]}$. The solution of this
		integral is known \cite{HarishChandra1957, Itzykson1980} and it is given by
		\[ \left\langle e^{-itE} \right\rangle = \left( \prod_{i=1}^{d-1} i! \right)
		\frac{ \det \exp \left[ t \lambda_i \epsilon_j \right] }{t^{\frac{d^2-d}{2}}
			\Delta(\rho) \Delta(H) } \; , \] where $\Delta(\cdot)$ denotes the
		Vandermonde determinants $\Delta(\rho) := \prod_{1 \leq i < j \leq d} \left(
		\lambda_j - \lambda_i \right)$ and $\Delta(H) := \prod_{1 \leq i < j \leq d}
		\left( \epsilon_j - \epsilon_i \right)$. We prefer however to estimate
		$\mathcal{G}(t)$ with the definition~(\ref{moment_genarating_function}), as
		we lack of an appropriate expansion for the term $\det \exp \left[ t
		\lambda_i \epsilon_j \right]$.
		
		\bibitem[Flajolet and Sedgewick(2009)]{Flajolet}
		Philippe Flajolet and Robert Sedgewick.
		\newblock \emph{Analytic Combinatorics}.
		\newblock Cambridge University Press, USA, 1 edition, 2009.
		\newblock ISBN 0521898064.
		
		\bibitem[Dyson(1962)]{Dyson1962}
		Freeman~J. Dyson.
		\newblock The threefold way. algebraic structure of symmetry groups and
		ensembles in quantum mechanics.
		\newblock \emph{Journal of Mathematical Physics}, 3\penalty0 (6):\penalty0
		1199--1215, November 1962.
		\newblock \doi{10.1063/1.1703863}.
		\newblock URL \url{https://doi.org/10.1063/1.1703863}.
		
		\bibitem[Eynard et~al.(2015)Eynard, Kimura, and Ribault]{Eynard2015}
		Bertrand Eynard, Taro Kimura, and Sylvain Ribault.
		\newblock Random matrices, 2015.
		
		\bibitem[Wigner(1955)]{Wigner1955}
		Eugene~P. Wigner.
		\newblock Characteristic vectors of bordered matrices with infinite dimensions.
		\newblock \emph{The Annals of Mathematics}, 62\penalty0 (3):\penalty0 548,
		November 1955.
		\newblock \doi{10.2307/1970079}.
		\newblock URL \url{https://doi.org/10.2307/1970079}.
		
		\bibitem[Fyodorov(2004)]{Fyodorov2004}
		Yan~V. Fyodorov.
		\newblock Introduction to the random matrix theory: Gaussian unitary ensemble
		and beyond.
		\newblock \emph{London Mathematical Society Lecture Note Series}, 322:\penalty0
		31--78, 2004.
		
		\bibitem[Gharibyan et~al.(2018)Gharibyan, Hanada, Shenker, and
		Tezuka]{Gharibyan2018}
		Hrant Gharibyan, Masanori Hanada, Stephen~H. Shenker, and Masaki Tezuka.
		\newblock Onset of random matrix behavior in scrambling systems.
		\newblock \emph{Journal of High Energy Physics}, 2018\penalty0 (7), July 2018.
		\newblock \doi{10.1007/jhep07(2018)124}.
		\newblock URL \url{https://doi.org/10.1007/jhep07(2018)124}.
		
		\bibitem[Mehta(2004)]{mehta2004random}
		M.~L. Mehta.
		\newblock \emph{Random matrices}.
		\newblock Academic Press, Amsterdam San Diego, CA, 2004.
		\newblock ISBN 9780120884094.
		
		\bibitem[Matsumoto(2013)]{Matsumoto2013}
		Sho Matsumoto.
		\newblock Weingarten calculus for matrix ensembles associated with compact
		symmetric spaces.
		\newblock \emph{Random Matrices: Theory and Applications}, 02\penalty0
		(02):\penalty0 1350001, April 2013.
		\newblock \doi{10.1142/s2010326313500019}.
		\newblock URL \url{https://doi.org/10.1142/s2010326313500019}.
		
		\bibitem[Oxtoby(1946)]{Oxtoby1946}
		John~C. Oxtoby.
		\newblock Invariant measures in groups which are not locally compact.
		\newblock \emph{Transactions of the American Mathematical Society},
		60:\penalty0 215--215, 1946.
		\newblock \doi{10.1090/s0002-9947-1946-0018188-5}.
		\newblock URL \url{https://doi.org/10.1090/s0002-9947-1946-0018188-5}.
		
		\bibitem[Wang et~al.(2007)Wang, Hiroshima, Tomita, and Hayashi]{WANG2007}
		X.~Wang, T.~Hiroshima, A.~Tomita, and M.~Hayashi.
		\newblock Quantum information with gaussian states.
		\newblock \emph{Physics Reports}, 448\penalty0 (1-4):\penalty0 1--111, August
		2007.
		\newblock \doi{10.1016/j.physrep.2007.04.005}.
		\newblock URL \url{https://doi.org/10.1016/j.physrep.2007.04.005}.
		
		\bibitem[Serafini(2017)]{serafini2017quantum}
		Alessio Serafini.
		\newblock \emph{Quantum continuous variables : a primer of theoretical
			methods}.
		\newblock CRC Press, Taylor \& Francis Group, Boca Raton, FL, 2017.
		\newblock ISBN 9781482246346.
		
		\bibitem[Bennett et~al.(1996)Bennett, DiVincenzo, Smolin, and
		Wootters]{Bennet1996}
		Charles~H. Bennett, David~P. DiVincenzo, John~A. Smolin, and William~K.
		Wootters.
		\newblock Mixed-state entanglement and quantum error correction.
		\newblock \emph{Phys. Rev. A}, 54:\penalty0 3824--3851, Nov 1996.
		\newblock \doi{10.1103/PhysRevA.54.3824}.
		\newblock URL \url{https://link.aps.org/doi/10.1103/PhysRevA.54.3824}.
		
		\bibitem[Bennett et~al.(1999)Bennett, DiVincenzo, Mor, Shor, Smolin, and
		Terhal]{Bennet1999}
		Charles~H. Bennett, David~P. DiVincenzo, Tal Mor, Peter~W. Shor, John~A.
		Smolin, and Barbara~M. Terhal.
		\newblock Unextendible product bases and bound entanglement.
		\newblock \emph{Phys. Rev. Lett.}, 82:\penalty0 5385--5388, Jun 1999.
		\newblock \doi{10.1103/PhysRevLett.82.5385}.
		\newblock URL \url{https://link.aps.org/doi/10.1103/PhysRevLett.82.5385}.
		
		\bibitem[Vollbrecht and Werner(2001)]{Vollbrecht2001}
		K.~G.~H. Vollbrecht and R.~F. Werner.
		\newblock Entanglement measures under symmetry.
		\newblock \emph{Phys. Rev. A}, 64:\penalty0 062307, Nov 2001.
		\newblock \doi{10.1103/PhysRevA.64.062307}.
		\newblock URL \url{https://link.aps.org/doi/10.1103/PhysRevA.64.062307}.
		
		\bibitem[Lee et~al.(2003)Lee, Chi, Oh, and Kim]{Lee2003}
		Soojoon Lee, Dong~Pyo Chi, Sung~Dahm Oh, and Jaewan Kim.
		\newblock Convex-roof extended negativity as an entanglement measure for
		bipartite quantum systems.
		\newblock \emph{Phys. Rev. A}, 68:\penalty0 062304, Dec 2003.
		\newblock \doi{10.1103/PhysRevA.68.062304}.
		\newblock URL \url{https://link.aps.org/doi/10.1103/PhysRevA.68.062304}.
		
		\bibitem[Hayashi et~al.(2008)Hayashi, Markham, Murao, Owari, and
		Virmani]{Hayashi2008}
		Masahito Hayashi, Damian Markham, Mio Murao, Masaki Owari, and Shashank
		Virmani.
		\newblock Entanglement of multiparty-stabilizer, symmetric, and antisymmetric
		states.
		\newblock \emph{Physical Review A}, 77\penalty0 (1), January 2008.
		\newblock \doi{10.1103/physreva.77.012104}.
		\newblock URL \url{https://doi.org/10.1103/physreva.77.012104}.
		
		\bibitem[Ouyang(2014)]{Ouyang2011}
		Yingkai Ouyang.
		\newblock Channel covariance, twirling, contraction, and some upper bounds on
		the quantum capacity.
		\newblock \emph{Quantum Information and Computation}, 14:\penalty0 0917--0936,
		Sep 2014.
		\newblock \doi{10.26421/QIC17.11-12}.
		\newblock URL \url{https://link.aps.org/doi/10.1103/PhysRevA.68.062304}.
		
		\bibitem[Harish-Chandra(1957)]{HarishChandra1957}
		Harish-Chandra.
		\newblock Differential operators on a semisimple lie algebra.
		\newblock \emph{American Journal of Mathematics}, 79\penalty0 (1):\penalty0 87,
		January 1957.
		\newblock \doi{10.2307/2372387}.
		\newblock URL \url{https://doi.org/10.2307/2372387}.
		
		\bibitem[Itzykson and Zuber(1980)]{Itzykson1980}
		C.~Itzykson and J.-B. Zuber.
		\newblock The planar approximation. {II}.
		\newblock \emph{Journal of Mathematical Physics}, 21\penalty0 (3):\penalty0
		411--421, March 1980.
		\newblock \doi{10.1063/1.524438}.
		\newblock URL \url{https://doi.org/10.1063/1.524438}.
		
	\end{thebibliography}

\end{document}